\newtheorem{theorem}{Theorem}
\newtheorem{lemma}{Lemma}
\newtheorem{proposition}{Proposition}
\newtheorem{definition}{Definition}
\newcommand{\R}{{\mathbb R}}
\newcommand{\X}{\mathcal{X}}
\newcommand{\x}{\textsc{\bf x}}
\newcommand{\y}{\textsc{\bf y}}
\newcommand{\p}{\textbf{p}}
\newcommand{\bl}{\bm{\lambda}}
\newcommand{\bp}{\bm{\pi}}
\newcommand{\conv}{\text{conv}}
\newcommand{\aff}{\text{aff}}
\newcommand{\modif}[1]{{\color{black} #1}}
\newcommand{\edit}[1]{{\color{black} #1}}
\newcounter{claims}
\newcommand{\clitem}[1]{\refstepcounter{claims}\item{\bfseries Claim~\theclaims: \mathversion{bold}#1\mathversion{normal}}}
\newenvironment{claims}
   {\setcounter{claims}{0}
   \description
   }
   {\let\item\origitem\enddescription}
\title{On the combinatorics of the 2-class classification problem}
\author{\sc{Ricardo C. Corrêa} \\
{\small Departamento de Ciência da Computação}\\
{\small Instituto Multidisciplinar}\\
{\small Universidade Federal Rural do Rio de Janeiro}\\
{\small Brazil}
%\ead{correa@ufrrj.br}
\and \sc{Diego Delle Donne}, \sc{Javier Marenco} \\
{\small Instituto de Ciencias}\\
{\small Universidad Nacional de General Sarmiento} \\
{\small Argentina}}
\begin{document}

\maketitle

\begin{abstract}
A set of points $\X = \X_B \cup \X_R \subseteq \R^d$ is \emph{linearly separable} if the convex hulls of $\X_B$ and $\X_R$ are disjoint, hence there
exists a hyperplane separating $\X_B$ from $\X_R$. Such a hyperplane provides a method for classifying new points, according to which side of the
hyperplane the new points lie. When such a linear separation is not possible, it may still be possible to partition $\X_B$ and $\X_R$ into
prespecified numbers of \emph{groups}, in such a way that every group from $\X_B$ is linearly separable from every group from $\X_R$. We may also
discard some points as \emph{outliers}, and seek to minimize the number of outliers necessary to find such a partition. Based on these ideas,
Bertsimas and Shioda proposed the \emph{classification and regression by integer optimization} (CRIO) method in 2007. In this work we explore the
integer programming aspects of the classification part of CRIO, in particular theoretical properties of the associated formulation. We are able to
find facet-inducing inequalities coming from the stable set polytope, hence showing that this classification problem has exploitable combinatorial properties.

\textsc{Keywords:} classification, integer programming, polyhedral combinatorics
\end{abstract}

\section{Introduction}
\label{sec.introduction}

The data classification problem is a widely-studied topic within the machine learning and the data mining communities. Briefly speaking, this problem aims at finding a partition of an euclidean space that represents the underlying pattern of a set of points. Many computational methods for tackling this problem exist, including decision trees \cite{breiman1984}, linear and quadratic programming~\cite{CarrizosaMorales2012,FreedGlover2007}, and support vector machines \cite{vapnik1999}, among others. Only recently has the use of discrete optimization, and in particular integer programming, been proposed within these settings~\cite{AmaldiConiglioTaccari2016,bertsimas2007,Ryoo2006}.

In this work we are interested in the method called \emph{classification and regression via integer optimization} (CRIO), proposed by Bertsimas and
Shioda in \cite{bertsimas2007}. This pioneering work presents a remarkable application of integer programming to this field, by proposing
to classify points by hyperplanes separating pairs of groups of points. Given two sets of points in $\R^d$, it is not always possible to find a hyperplane separating them, as Figure~\ref{fig.examplea} shows. However, assuming that the underlying pattern can be expressed in terms of convex sets, it may be possible to subdivide each set into \emph{groups}, and then find hyperplanes separating each pair of groups coming from different sets, as in Figure~\ref{fig.exampleb}. In \cite{bertsimas2007} a method is proposed to find such \emph{linearly separable} groups, which includes the solution of a mixed integer program as a key step. This model also identifies a number of points that deviate from the underlying pattern and need to be disregarded in order to enable the desired classification. Further developments~\cite{AmaldiConiglioTaccari2016,Ryoo2006} and computational experiments performed with instances from the literature show that this approach is promising as an effective alternative in practice.

\begin{figure}
\centering
\begin{subfigure}[Two sets of points in $\R^2$.]{ \label{fig.examplea}
\includegraphics[height=4cm]{./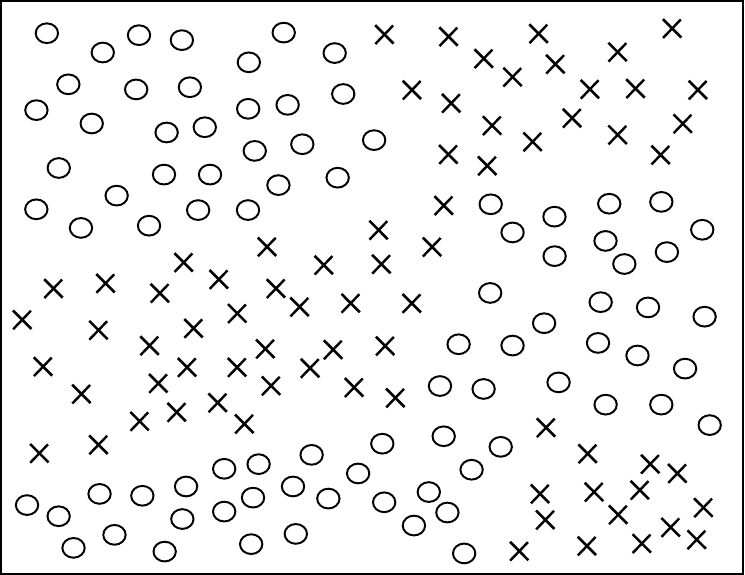}
}\end{subfigure}
\begin{subfigure}[A partition of sets into groups.]{ \label{fig.exampleb}
\includegraphics[height=4cm]{./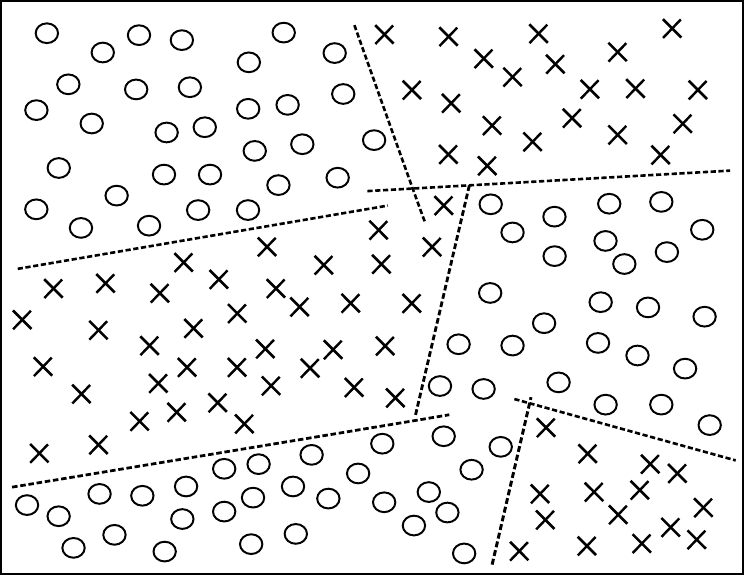}
}\end{subfigure}
\caption{Circles represent points in one set, crosses
represent points in the other set, and the dashed lines represent the hyperplanes separating each group.}
\label{fig.example}
\end{figure}

There exist general algorithms for solving integer programs, and very strong implementations of these algorithms are available. However, since the
general integer programming problem is NP-hard, the running times of these approaches may be intractably large, depending on the instance size and the
model structure. Due to this fact, it is usual to study polyhedra associated with specific integer programming formulations, in order to find strong
valid inequalities that may be helpful within these algorithms. Such polyhedral explorations may also reveal interesting structures and relations
among different problems. In this work we are interested in such issues concerning a 0-1 integer program derived from CRIO in order to partition each set into groups, find the separating hyperplanes among them, and identify a minimum number of disregarded points. The 0-1 integer program studied results from a projection of a mixed integer one. The main goal that motivated this work is the question of whether there exist standard combinatorial
structures inherent to the CRIO method.

The remainder of this work is organized as follows. Section~\ref{sec.problem} introduces the problem in detail and sets out the notation used
throughout the paper. Section~\ref{sec.polyhedral} contains our initial polyhedral study, including several families of facet-inducing inequalities.
A first affirmative answer to the existence of combinatorial structures within the associated polyhedra, in the form of facets coming from the
stable set polytope, is the subject of Section~\ref{sec.ograph}. Finally, Section~\ref{sec.conclusions} closes the paper with conclusions and
lines for future research.

\newcommand{\I}{\mathcal{I}}
\newcommand{\QQ}{Q}
\newcommand{\PP}[1][{\I}]{P_{#1}}

\section{The 2-class problem} \label{sec.problem}

Let $\X = \{ \x_1, \dots, \x_m \} \subset \R^d$ be a set of {\em samples} (also referred to as {\em points}) and consider a partition of
the index set $[m] := \{1, \dots, m\}$ into two \emph{classes} $B$ and $R$, defining subsets $\X_B = \{ \x_i : i \in B \}$ and $\X_R = \{
\x_j : j \in R \}$. We start with some preliminary definitions before stating the mixed integer formulation.

\subsection{Linear separability}

The set $\X$ is {\em linearly separable} if and only if $\conv(\X_B) \cap \conv(\X_R) = \emptyset$, where $\conv(X)$ denotes the convex hull of $X$.
It is worth noting that being linearly separable is equivalent to be partitionable into two convex sets respecting $B$ and $R$. Hence, the
linear separability of $\X$ is characterized by the fact that the set of all $\lambda_i \geq 0$, $i \in [m]$, such that
\begin{align}
\sum\limits_{i \in B} \lambda_i \x_i & = \sum\limits_{j \in R} \lambda_j \x_j, \nonumber \\
\sum\limits_{i \in B} \lambda_i & = \sum\limits_{j \in R} \lambda_j = 1 \nonumber
\end{align}
must be empty. Applying Farkas' Lemma, we get that this characterization is
equivalent to state that there exist $\p \in \R^d$ and $q,r \in
\R$ such that
\begin{align}
r - q & < 0, \nonumber \\
\p\x_i + q & \leq 0 && \mbox{for } i \in B, \nonumber \\
\p\x_j + r & \geq 0 && \mbox{for } j \in R. \nonumber
\end{align}
Adding $q$ in both sides of the last inequality and defining $\delta = q-r$,
we get
\begin{align}
\p\x_i + q & \leq 0 && \mbox{for } i \in B, \nonumber \\
\p\x_j + q & \geq \delta && \mbox{for } j \in R. \nonumber
\end{align}
If these conditions hold for some $\p \in \R^d$ and $q,\delta \in \R$, then set $\delta' = \delta/2$, $q'
= q-\delta'$ and divide $\p$, $q'$, and $\delta'$ by $\delta'$ to conclude
that $\X$ is linearly separable if and only if the set $Q$ of hyperplanes $(\p, q)$
such that
\begin{align}
\p\x_i + q & \leq -1 && \mbox{for } i \in B, \nonumber \\
\p\x_j + q & \geq 1 && \mbox{for } j \in R \nonumber
\end{align}
is not empty. The example in Figure~\ref{fig.example} clearly does not satisfy this property and, then, is not linearly separable.

\subsection{Piecewise linear separability}

In several situations of interest, the underlying pattern of the points in $\X$ cannot be expressed by two convex sets only. In order to
cope with such a scenario, let $L_B$ and $L_R$, $L_B \cap L_R = \emptyset$, be two sets of \emph{group indices} specified for $B$ and $R$,
respectively.
An assignment of points in $\X_B$ to indices in $L_B$ defines groups of points in $\X_B$. \emph{Group} $k
\in L_B$ is the subset of $\X_B$ assigned to index $k$. Groups of points in $\X_R$ are
defined similarly. A \emph{piecewise linear separation} of $\X$ is an
assignment of points in $\X_B$ to indices in $L_B$ and of points in
$\X_R$ to indices in $L_R$ such that groups $k$ and $\ell$ are linearly
separable, for all $k \in L_B$ and $\ell \in L_R$. A hyperplane
$(\p_{k\ell}, q_{k\ell})$ separating groups $k$ and $\ell$ is
such that
\begin{align}
\p_{k\ell}\x_i + q_{k\ell} & \leq -1 && \mbox{for } i \in B \mbox{ such that } \x_i \text{ is assigned to } k, \nonumber \\
\p_{k\ell}\x_j + q_{k\ell} & \geq 1 && \mbox{for } j \in R \mbox{ such that } \x_j \text{ is assigned to } \ell. \nonumber
\end{align}

\subsection{Mixed integer programming formulation} \label{sec.formulation}

We now state a mixed integer programming formulation inspired by the one that constitutes the key step in the CRIO method~\cite{bertsimas2007}. The input of the problem is formed by the set of points $\X$, the partition of point indices $B$ and $R$, and the sets $L_B$ and $L_R$ of group indices. The objective is to find an assignment of points to groups that induces a piecewise linear separation of $\X$.

For every group $k\in L_B$ and every group $\ell\in L_R$, the formulation contains the variables $\p_{k\ell} \in \R^d$ and $q_{k\ell} \in \R$, in such
a way that $\p_{k\ell} \x = q_{k\ell}$ is the hyperplane separating the groups $k$ and $\ell$. For $i\in B$ and $k\in L_B$, the binary variable
$z_{ik}$ represents whether $\x_i$ is assigned to group $k$ or not. For $j\in R$ and $\ell\in L_R$, the binary variable $z_{j\ell}$ represents whether
$\x_j$ is assigned to group $\ell$ or not. In this setting, we can provide the formulation corresponding to the maximization of
\begin{equation}
\sum_{i\in B} \sum_{k\in L_B} z_{ik} + \sum_{j\in R} \sum_{\ell\in L_R} z_{j\ell}
\label{eq.objective}
\end{equation}
subject to
\begin{align}
\p_{k\ell} \x_i + q_{k\ell} &\leq M - (M+1) z_{ik}
			&& \forall i \in B,\forall  k\in L_B, \forall \ell\in L_R,
			\label{eq.formulation.blue}\\
\p_{k\ell} \x_j + q_{k\ell} &\geq - M + (M+1) z_{j\ell}
			&& \forall j \in R,\forall  \ell\in L_R, \forall k\in L_B,
			\label{eq.formulation.red}\\
\sum_{k \in L_B} z_{ik} &\leq 1
			&& \forall i \in B, \label{eq.assign.blue}\\
\sum_{\ell \in L_R} z_{j\ell} &\leq 1
			&& \forall j \in R, \label{eq.assign.red}\\
(\p_{k\ell}, q_{k\ell}) &\in \R^{d+1}
			&& \forall k \in L_B, \forall \ell\in L_R, \label{eq.reals}\\
z_{ik} &\in \{0,1\}
			&& \forall (i,k) \in (B\times L_B) \cup (R\times L_R), \label{eq.binaries}
\end{align}
where $M$ is a big positive number. Note that the feasibility of the model does not depend on the actual value of $M$, namely if $M$ is small then some solutions are lost but the problem remains feasible.

\begin{definition}
Given an instance $\I = (\X, B, R, L_B, L_R)$ of the problem, we call $\PP$ the convex hull of the points $(\p, q, z)$ satisfying
\eqref{eq.formulation.blue}-\eqref{eq.binaries}.
\end{definition}

In addition, we can introduce a binary variable $o_i$ for each $i \in [m]$ to specify whether sample $\x_i$ is an {\em outlier} or not, i.e., if
it is not assigned to any group. With these settings, the objective function can be written as $$
\text{minimize } \sum_{i \in [m]} o_{i}
$$
and constraints (\ref{eq.assign.blue}) and (\ref{eq.assign.red}) should be replaced by the following constraints
\begin{align}
\sum_{k \in L_B} z_{ik} &= 1 - o_i
& \forall i \in B, \label{eq.outliers.blue}\\
\sum_{l \in L_R} z_{j\ell} &= 1 - o_i
& \forall j \in R. \label{eq.outliers.red}
\end{align}

\subsection{Integer programming formulation}

We now discuss an integer programming formulation resulting from a projection of \eqref{eq.formulation.blue}-\eqref{eq.binaries} onto the space of the $z$-variables. For this purpose, let $Q_{k\ell}$ denote the set of points $(\p_{k\ell}, q_{k\ell}, z_{Bk}, z_{R\ell})$ satisfying
\eqref{eq.formulation.blue}-\eqref{eq.formulation.red} and \eqref{eq.reals}-\eqref{eq.binaries} for $\ell\in L_R$ and $k\in L_B$, where $z_{Bk}$ and
$z_{R\ell}$ are vectors constituted by the variables $z_{ik}$, for all $i \in B$, and $z_{j\ell}$, for all $j \in R$, respectively. Rewrite
\eqref{eq.formulation.blue}-\eqref{eq.formulation.red} for $k$ and $\ell$ as
\[
\left[\begin{array}{cc}
\X_B^\top & \mathbf{1} \\
-\X_R^\top & -\mathbf{1}
\end{array}\right]
\left[\begin{array}{c}
\p_{k\ell} \\
q_{k\ell}
\end{array} \right] +
(M+1)\cdot\left[\begin{array}{c}
z_{Bk} \\
z_{R\ell}
\end{array}\right] \leq M\cdot\mathbf{1},
\]
and denote by $Q$ the combination of $Q_{k\ell}$, for all $\ell\in L_R$ and $k\in L_B$. The projection of $Q_{k\ell}$ onto $\{0, 1\}^{B \cup R}$ is defined as
\[
Proj_z(Q_{k\ell}) = \{ (z_{Bk},z_{R\ell}) \in \{0, 1\}^{B \cup R}: \exists (\p_{k\ell},q_{k\ell},z_{Bk},z_{R\ell}) \in Q_{k\ell} \}
\]
The combination of $Proj_z(Q_{k\ell})$, for all $\ell\in L_R$ and $k\in L_B$, gives $Proj_z(Q)$, which in turn gives the set of all possibly
intersecting group assignments. By Theorem 1.1 of~\cite{balas2005}, $Proj_z(Q_{k\ell})$ is given by the group assignments $z$ satisfying
\[
(M+1)\left(\sum_{i \in B} \upsilon_{ik} z_{ik} + \sum_{j \in R} \upsilon_{j\ell} z_{j\ell} \right) \leq M\left(\sum_{i \in B} \upsilon_{ik}
+ \sum_{j \in R} \upsilon_{j\ell} \right)
\]
for all $(\upsilon_{Bk},\upsilon_{R\ell}) \geq \mathbf{0}$ such that
\begin{align}
\sum_{i \in B} \upsilon_{ik} \x_i & = \sum_{j \in R} \upsilon_{j\ell} \x_j \label{eq.combination.ux} \\
\sum_{i \in B} \upsilon_{ik} & = \sum_{j \in R} \upsilon_{j\ell}. \label{eq.combination.u}
\end{align}
It is worth noting that the convex hull of groups $k$ and $\ell$ intersect in a group assignment $z$ if and only if there exists
$(\upsilon_{Bk},\upsilon_{R\ell})$ with $(z_{ik}=0 \Rightarrow \upsilon_{ik}=0)$, $(z_{j\ell}=0 \Rightarrow \upsilon_{j\ell}=0)$, and $\sum_{i \in B} \upsilon_{ik} = 1$ such
that~\eqref{eq.combination.ux}-\eqref{eq.combination.u} hold. Hence,
\begin{equation}
(M+1)\left(\sum_{i \in B} \upsilon_{ik} z_{ik} + \sum_{j \in R} \upsilon_{j\ell} z_{j\ell} \right) \leq 2M
\label{eq.upsilon.z}
\end{equation}
prevents such a group assignment to be chosen. The integer programming formulation consists in maximizing~\eqref{eq.objective} over all
binary points $z$ of type~\eqref{eq.binaries} satisfying~\eqref{eq.assign.blue}-\eqref{eq.assign.red} and \eqref{eq.upsilon.z} for the extreme rays
of the set defined by~\eqref{eq.combination.ux}-\eqref{eq.combination.u}.

A final remark is in order with respect to this integer programming formulation. The projection of $\PP$ onto $[0, 1]^{(B \times L_B) \cup (R
\times L_R)}$, $Proj_z(\PP)$, can be seen as the convex hull of the group assignments in $Proj_z(Q)$
satisfying~\eqref{eq.assign.blue}-\eqref{eq.assign.red}. In addition, the valid inequalities discussed in the next sections involve $z$-variables
only. Consequently, Corollary 2.2 of~\cite{balas2005} can then be applied to conclude that the facetness results of those sections are valid for
$Proj_z(\PP)$ as well.

\newcommand{\zeros}{\hbox{\textbf{0}}}
\newcommand{\nil}{\hbox{\textbf{b}}}
\newcommand{\base}{\hbox{\textbf{b}}}
\newcommand{\unitp}{e^{\p}}
\newcommand{\unitq}{e^{q}}
\newcommand{\unitz}{e^{z}}

\section{Polyhedral study} \label{sec.polyhedral}

	In this section we are interested in facets of $\PP$, with a particular interest in combinatorial structures originating facet-inducing inequalities.
	For $k\in L_B$, $\ell\in L_R$, and $a\in[d]$, we denote by $\unitp_{k\ell {a}}$ the unit vector associated with the ${a}$-th
	coordinate of the variable vector $\p_{k\ell}$. Correspondingly, for $k\in L_B$ and $\ell\in L_R$, we denote by $\unitq_{k\ell}$ the unit vector associated with the variable $q_{k\ell}$. For $i\in B$ and $k\in L_B$, let $\unitz_{ik}$ be the unit vector associated with the variable $z_{ik}$. Finally, for $j\in R$ and $\ell\in L_R$, we denote by $\unitz_{j\ell}$ the unit vector associated with the variable $z_{j\ell}$.

	\begin{proposition}
	\label{prop.dimension}
		$\PP$ is full-dimensional.
	\end{proposition}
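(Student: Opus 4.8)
The plan is to exhibit $\dim(\PP)$ affinely independent points in $\PP$, which together with the ambient dimension $d|L_B||L_R| + |L_B||L_R| + |B||L_B| + |R||L_R|$ shows full-dimensionality; equivalently, I will show that any equation $\sum \alpha_{k\ell a} p_{k\ell a} + \sum \beta_{k\ell} q_{k\ell} + \sum \gamma_{ik} z_{ik} + \sum \gamma_{j\ell} z_{j\ell} = \delta$ valid on $\PP$ must be trivial. The natural anchor is a simple feasible point: take $z = \zeros$ (every point is an outlier / unassigned, which is feasible since \eqref{eq.assign.blue}--\eqref{eq.assign.red} are satisfied with slack and \eqref{eq.formulation.blue}--\eqref{eq.formulation.red} reduce to $\p_{k\ell}\x_i + q_{k\ell} \le M$, $\p_{k\ell}\x_j + q_{k\ell} \ge -M$, which hold for $(\p_{k\ell},q_{k\ell}) = (\mathbf 0, 0)$ when $M \ge 0$). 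Call this point $\base$.

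First I would handle the continuous variables. Starting from $\base$, for each $k\in L_B$, $\ell\in L_R$ and each coordinate $a\in[d]$, I perturb $p_{k\ell a}$ by a small amount $\pm\varepsilon$; since all $z$-variables are $0$, constraints \eqref{eq.formulation.blue}--\eqref{eq.formulation.red} read $\p_{k\ell}\x_i + q_{k\ell} \le M$ and $\ge -M$, and for $\varepsilon$ small enough (depending on $M$ and the coordinates of the samples) these remain satisfied, so $\base \pm \varepsilon\,\unitp_{k\ell a} \in \PP$. Likewise $\base \pm \varepsilon\,\unitq_{k\ell} \in \PP$. Feeding these into the putative valid equation immediately forces all $\alpha_{k\ell a} = 0$ and all $\beta_{k\ell} = 0$, so the equation involves only $z$-variables.

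Next I would activate the $z$-variables one at a time. Fix $i \in B$ and $k \in L_B$; I want a feasible point with $z_{ik} = 1$ and all other $z$-variables $0$. Constraint \eqref{eq.assign.blue} allows this. The binding constraints are \eqref{eq.formulation.blue} for this particular $i$ and $k$ (over all $\ell \in L_R$), which become $\p_{k\ell}\x_i + q_{k\ell} \le M - (M+1) = -1$; I satisfy these by choosing, for each $\ell$, $\p_{k\ell} = \mathbf 0$ and $q_{k\ell} = -1$, which also satisfies \eqref{eq.formulation.red} for every $j \in R$ since $-1 \ge -M$ when $M \ge 1$. All other constraints in \eqref{eq.formulation.blue}--\eqref{eq.formulation.red} for groups $k'\ne k$ reduce to the $\base$ case and hold with $(\mathbf 0,0)$. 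Thus $\base + \unitz_{ik} \in \PP$, and comparing with $\base$ in the equation gives $\gamma_{ik} = \delta$. The symmetric construction for $j \in R$, $\ell \in L_R$ (set $q_{k\ell} = 1$, $\p_{k\ell} = \mathbf 0$ for all $k$, which forces the red constraint $\p_{k\ell}\x_j + q_{k\ell} = 1 \ge 1$ and the blue one $1 \le M$) gives $\gamma_{j\ell} = \delta$. Finally, evaluating the equation at $\base$ itself gives $\delta = 0$, hence all coefficients vanish and the equation is trivial.

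I expect the only delicate point to be bookkeeping the value of $M$: all of the above requires $M$ large enough (here $M \ge 1$ suffices, together with $\varepsilon$ small relative to $M$ and $\|\x_i\|$) for the exhibited points to be genuinely feasible. The paper notes that feasibility is independent of $M$ but that small $M$ can lose solutions; since $M$ is stipulated to be "a big positive number," this is not an obstacle, but I would state the hypothesis $M \ge 1$ explicitly. No intersection-freeness issue arises because in every point constructed at most one group on each side is nonempty, so there is nothing to separate and the hyperplane variables can be chosen freely within the stated bounds.
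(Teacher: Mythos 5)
Your proposal is correct and follows essentially the same route as the paper's proof: the all-zero anchor solution, small perturbations of the $\p$- and $q$-variables, and then activating each $z$-variable individually while shifting the relevant $q$-variables to $\mp 1$ (your phrasing via a putative valid equation is just the dual formulation of exhibiting affinely independent points). As a side note, your sign on the red side ($q_{k\ell'}=+1$) is the right one, whereas the paper's step for $j'\in R$ writes $-\sum_{k\in L_B}\unitq_{k\ell'}$, which appears to be a sign slip.
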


	\begin{proof}
	In order to prove this proposition, we construct the following affinely independent feasible solutions.
	\begin{enumerate}
	\item Let $\nil = (\zeros, \zeros, \zeros)$ be the solution having all variables set to null values. Since $z_{ik} = 0$ in $\nil$ for every $i\in [m]$, then no point is assigned to any group, and all constraints are satisfied, hence $\nil$ is feasible.
	\item For any $k\in L_B$, $\ell\in L_R$, and ${a}\in[d]$, consider the solution $\nil + \varepsilon \unitp_{k\ell {a}}$, with $0 <
	\varepsilon$ \edit{and $\varepsilon < M / |\x_{ia}|$ if $\x_{ia} \ne 0$, where $\x_{ia}$ is the $a$-th coordinate of $\x_i$}.
	This solution is feasible since all points are outliers due to \edit{the fact that all $z$ variables are null}, and it is affinely
	independent w.r.t.~the previous solutions, which have $\p_{k\ell a} = 0$.
	\item Similarly, for any $k\in L_B$ and $\ell\in L_R$, the solution $\nil + \varepsilon \unitq_{k\ell}$ is feasible and affinely independent w.r.t.~the previously-constructed solutions.
	\item For any \edit{$i'\in B$} and $k'\in L_B$, construct the solution $\nil + \edit{\unitz_{i'k'}} - \sum_{\ell\in L_R} \unitq_{k'\ell}$. Constraint \eqref{eq.formulation.blue} for $k=k'$, \edit{$i=i'$,} and $\ell\in L_R$ takes the form \modif{$-1\le -1$} (since $q_{k'\ell} = -1$ in this solution), hence it is satisfied. The remaining constraints are trivially satisfied. Furthermore, this solution is affinely independent w.r.t.~the previous solutions, which have $z_{ik'} = 0$.
	\item Similarly, for any \edit{$j'\in R$ and $\ell'\in L_R$, the solution $\nil + \unitz_{j'\ell'} - \sum_{k\in L_B} \unitq_{k\ell'}$} is feasible and affinely independent w.r.t.~the previous ones.
	\end{enumerate}
	The existence of these solutions shows that $\PP$ is full-dimensional.
 	\end{proof}

	The solutions constructed within the proof of Proposition~\ref{prop.dimension} allow to show the following facetness results in a quite straightforward way, hence the proof of the following proposition is omitted.

	\begin{proposition}
	\begin{itemize}
	\item[(i)] The model constraints \eqref{eq.assign.blue} and \eqref{eq.assign.red} are facet-inducing.
	\item[(ii)] The bound $z_{ik} \ge 0$ is facet-inducing, for every $i\in B$ and $k\in L_B$.
	\item[(iii)] The bound $z_{j\ell} \ge 0$ is facet-inducing, for every $j\in R$ and $\ell\in L_R$.
	\end{itemize}
	\end{proposition}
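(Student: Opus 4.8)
The plan is to prove all three statements via the standard affine‑independence argument: for each of the three faces I would exhibit $N := \dim(\PP)$ affinely independent feasible solutions lying on it, which, $\PP$ being full‑dimensional by Proposition~\ref{prop.dimension}, is enough for facetness ($N$ being the number of variables). The building blocks are already available: I would reuse the family $\mathcal{S}$ of $N+1$ affinely independent solutions from the proof of Proposition~\ref{prop.dimension} --- namely $\base$, the perturbations $\base+\varepsilon\,\unitp_{k\ell a}$ and $\base+\varepsilon\,\unitq_{k\ell}$, the ``blue'' solutions $\base+\unitz_{i'k'}-\sum_{\ell\in L_R}\unitq_{k'\ell}$, and the ``red'' solutions $\base+\unitz_{j'\ell'}-\sum_{k\in L_B}\unitq_{k\ell'}$.

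For (ii) and (iii) this is essentially free. Fixing $i\in B$, $k\in L_B$, I would observe that the only member of $\mathcal{S}$ with $z_{ik}\neq 0$ is the blue solution $\base+\unitz_{ik}-\sum_\ell\unitq_{k\ell}$ (every other one assigns at most one sample, and never the pair $(i,k)$), so discarding it leaves $N$ members of $\mathcal{S}$ on the face $\{z_{ik}=0\}$, still affinely independent as a subset of an affinely independent set. Since that discarded solution witnesses $z_{ik}\not\equiv 0$ on $\PP$, the face is proper, so $z_{ik}\geq 0$ is facet‑inducing; $z_{j\ell}\geq 0$ is handled identically, discarding instead the red solution $\base+\unitz_{j\ell}-\sum_k\unitq_{k\ell}$.

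For (i), with $i^*\in B$ fixed, I would replace the single base point $\base$ by the $|L_B|$ blue solutions $v_k:=\base+\unitz_{i^*k}-\sum_\ell\unitq_{k\ell}$ (all on the face $\sum_{k}z_{i^*k}=1$), designate one of them, $v_{\hat k}$, as pivot, and re‑run the construction of Proposition~\ref{prop.dimension} around $v_{\hat k}$: the $d\,|L_B|\,|L_R|$ perturbations $v_{\hat k}+\varepsilon_{k\ell a}\,\unitp_{k\ell a}$ and the $|L_B|\,|L_R|$ perturbations $v_{\hat k}+\varepsilon_{k\ell}\,\unitq_{k\ell}$ (each $\varepsilon$ small and of the sign needed to respect the few constraints~\eqref{eq.formulation.blue} tight at $v_{\hat k}$, those involving $i^*$ and $\hat k$); $(|B|-1)\,|L_B|$ solutions obtained from $v_{\hat k}$ by also assigning a blue sample $i\neq i^*$ to a group and clamping the corresponding $q$'s to $-1$; and $|R|\,|L_R|$ solutions obtained from $v_{\hat k}$ by also assigning a red sample $j$ to a group $\ell$, picking the hyperplanes $(\p_{k\ell},q_{k\ell})$ so they separate $\x_{i^*}$ from $\x_j$ --- possible because the samples of $\X$ are distinct, hence $\{\x_{i^*}\}$ and $\{\x_j\}$ are linearly separable and the empty groups impose no constraint. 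This yields $|L_B|+d|L_B||L_R|+|L_B||L_R|+(|B|-1)|L_B|+|R||L_R|=N$ solutions, all on the face; the difference vectors from $v_{\hat k}$ (the $\unitp_{k\ell a}$, the $\unitq_{k\ell}$, the $\unitz_{ik}$ with $i\neq i^*$, the $\unitz_{j\ell}$, and $\unitz_{i^*k}-\unitz_{i^*\hat k}$ with $k\neq\hat k$, once the $\unitq$'s are cancelled out) are linearly independent --- each lives in its own coordinate block, and within the $z_{i^*\cdot}$ block the $|L_B|-1$ differences are independent --- so the solutions are affinely independent. As $\base\in\PP$ lies off this face, it is proper, so~\eqref{eq.assign.blue} is facet‑inducing, and~\eqref{eq.assign.red} follows by exchanging the roles of $B$ and $R$.

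I expect (ii) and (iii) to be immediate and case (i) to concentrate all the work; the one real point to check is that the newly combined assignments in (i) --- those placing $i^*$ and a second sample into groups at once --- are feasible, which comes down to evaluating~\eqref{eq.formulation.blue}--\eqref{eq.formulation.red} at the clamped values $q_{k\ell}\in\{-1,0,1\}$, using that a group with no assigned sample never tightens a constraint, and tuning the perturbation signs and magnitudes (relying on $M$ being large, as assumed). The only genuine hypothesis invoked is that no sample of $B$ equals a sample of $R$ as a point of $\R^d$, which is built into the convention that $\X$ is a set; were it to fail for some $j\in R$, the face $\sum_k z_{i^*k}=1$ would be contained in $\{z_{j\ell}=0\}$ and would not be a facet.
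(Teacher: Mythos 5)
Your proposal is correct and follows essentially the approach the paper intends (its own proof is omitted precisely because the solutions of Proposition~\ref{prop.dimension} are meant to be reused): you discard the single off-face solution for (ii)--(iii) and re-centre the same construction on the face $\sum_{k\in L_B} z_{i^*k}=1$ for (i), adding the separating-hyperplane adjustments needed there. The only caveat, inherited verbatim from the paper, is a sign slip in the ``red'' solutions of that family: they must have $q_{k\ell'}=+1$ for all $k\in L_B$ (not $-1$), since \eqref{eq.formulation.red} with $z_{j'\ell'}=1$ reads $\p_{k\ell'}\x_{j'}+q_{k\ell'}\ge 1$; with that trivial fix your argument goes through unchanged.
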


	\subsection{Convex-inclusion inequalities}

	We now explore families of valid inequalities for $\PP$, and study their facetness properties. We first present a familiy of valid inequalities involving a point $\x_j\in\X_R$ and a set of points in $\X_B$ whose convex hull contains $\x_j$. In this setting, we may consider the valid inequality given by the following proposition. We adopt the notation $\x_S := \{\x_i: i\in S\}$, for any $S \subseteq [m]$.

	\begin{proposition}
		Let $k \in L_B$, $j \in R$ and ${S} = \{i_1, \ldots, {i_s}\} \subseteq B$ such that $\x_{j} \in \conv(\x_{S})$.
		The \emph{convex-inclusion inequality}
		\begin{equation}
			\sum_{i\in {S}} z_{ik} + \sum_{\ell \in L_R} z_{j\ell} \ \leq\ {s}
			\label{in.convex.inclusion}
		\end{equation}
		is valid for $\PP$.
	\end{proposition}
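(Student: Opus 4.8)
The plan is to verify the inequality on every integer feasible point of the formulation; since $\PP$ is by definition the convex hull of such points, validity there implies validity on $\PP$. So I would start from an arbitrary $(\p,q,z)$ satisfying \eqref{eq.formulation.blue}--\eqref{eq.binaries} and show \eqref{in.convex.inclusion} holds.

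The first step is to use the combinatorial constraints to reduce to a single critical configuration. Since each $z_{ik}\in\{0,1\}$ we have $\sum_{i\in S} z_{ik}\le s$, and constraint \eqref{eq.assign.red} gives $\sum_{\ell\in L_R} z_{j\ell}\le 1$; hence \eqref{in.convex.inclusion} can only be violated if simultaneously $z_{ik}=1$ for every $i\in S$ \emph{and} $z_{j\ell^*}=1$ for some $\ell^*\in L_R$. Assuming this, the second step is to read off the big-$M$ constraints for the pair $(k,\ell^*)$: constraint \eqref{eq.formulation.blue} with $z_{ik}=1$ becomes $\p_{k\ell^*}\x_i+q_{k\ell^*}\le M-(M+1)=-1$ for each $i\in S$, while \eqref{eq.formulation.red} with $z_{j\ell^*}=1$ becomes $\p_{k\ell^*}\x_j+q_{k\ell^*}\ge -M+(M+1)=1$. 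The final step is to invoke $\x_j\in\conv(\x_S)$: writing $\x_j=\sum_{i\in S}\mu_i\x_i$ with $\mu_i\ge 0$ and $\sum_{i\in S}\mu_i=1$, the corresponding convex combination of the $s$ inequalities over $S$ yields
$$\p_{k\ell^*}\x_j+q_{k\ell^*}=\sum_{i\in S}\mu_i\bigl(\p_{k\ell^*}\x_i+q_{k\ell^*}\bigr)\le-\sum_{i\in S}\mu_i=-1,$$
which contradicts $\p_{k\ell^*}\x_j+q_{k\ell^*}\ge 1$. Thus the critical configuration is infeasible and \eqref{in.convex.inclusion} holds at every integer point, hence on $\PP$.

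There is no substantial obstacle here: the argument is essentially a one-line convexity/separation contradiction, and the only matters needing care are (a) observing that it suffices to test integer points because $\PP$ is their convex hull, and (b) noting that the conclusion does not depend on the actual value of $M$, since $M-(M+1)z$ and $-M+(M+1)z$ both collapse to $-1$ and $+1$ respectively when $z=1$. If one wants a proof that manifestly does not mention integrality, the same computation works for any $(\p,q,z)\in\PP$ after one checks that $\sum_{i\in S} z_{ik}=s$ together with $\sum_{\ell}z_{j\ell}=1$ still forces (via \eqref{eq.assign.blue}--\eqref{eq.assign.red} and the bounds) the pointwise values $z_{ik}=1$ and $z_{j\ell^*}>0$ used above; but restricting to vertices is cleaner.
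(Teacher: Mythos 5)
Your proof is correct and takes essentially the same route as the paper: reduce to the critical assignment where every $i\in S$ is placed in group $k$ and $\x_j$ in some group $\ell^*$, and rule it out using $\x_j\in\conv(\x_S)$. The only difference is that you spell out the big-$M$ arithmetic and the convex-combination contradiction explicitly, whereas the paper simply invokes the non-existence of a hyperplane separating $\x_j$ from $\x_S$; both are sound.
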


	\begin{proof}
	Let $(\p,q,z)\in\PP$ be a feasible solution. If $\x_j$ is an outlier in this solution ({\em i.e.}, $z_{j\ell} = 0$ for every $\ell\in L_R$), then
	\eqref{in.convex.inclusion} is trivially satisfied, so assume $z_{j\ell} = 1$ for some $\ell\in L_R$. Since $\x_j$ is contained in the convex hull of
	the points $\x_{S}$, then there is no hyperplane separating $\x_j$ from $\x_{S}$. This implies that a solution having all the points in
	${S}$ assigned to the same group would not be feasible, hence $\sum_{i\in {S}} z_{ik} \le |{S}|-1 = {s}-1$, and
	\eqref{in.convex.inclusion} is satisfied.
	\end{proof}

	Xavier and Camp\^{e}lo \cite{xavier2011} proposed a general facet-generating procedure that takes a valid (facet-inducing) inequality $\pi x \leq \pi_0$ for a polytope $P$ and a valid (facet-inducing) inequality for the face of $P$ defined by $\pi x = \pi_0$, and produces a new valid (facet-inducing) inequality for $P$ by combining them. It is interesting to note that inequalities \eqref{in.convex.inclusion} are obtained with the proposed procedure by using \eqref{eq.assign.red} as $\pi x \leq \pi_0$ along with $\sum_{i\in {S}} z_{ik} \leq {s} {-1}$, which is valid when $\sum\limits_{\ell \in L_R} z_{j\ell} = 1$.

	\edit{If $(\p,q,z)$ is a feasible solution, we say that a constraint is \emph{strictly satisfied} by $(\p,q,z)$ if the latter does not satisfy the
	constraint with equality.}

	\begin{theorem}
	Assume $|L_R| \ge 2$. The inequality \eqref{in.convex.inclusion} defines a facet of $\PP$ if and only if ${S}$ is minimal w.r.t.~the property $\x_{j} \in \conv(\x_{S})$ ({\em i.e.}, $\x_{j} \not \in \conv(\x_{{S}'})$ for every ${S}'\subset {S}$).
	\end{theorem}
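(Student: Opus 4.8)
The plan is to prove the two directions separately; the ``only if'' part is short, and the ``if'' part carries the weight.

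For the ``only if'' direction, suppose $S$ is not minimal, and fix $S'\subsetneq S$ with $\x_j\in\conv(\x_{S'})$. Then inequality \eqref{in.convex.inclusion} written for $S$ is exactly the sum of inequality \eqref{in.convex.inclusion} written for $S'$ (valid by the preceding proposition) and the $s-|S'|$ bound inequalities $z_{ik}\le 1$ for $i\in S\setminus S'$ (each valid since $z$ is binary). Hence on the face induced by \eqref{in.convex.inclusion} every summand must hold with equality, so that face is contained in the proper face induced by inequality \eqref{in.convex.inclusion} for $S'$; since those two inequalities have different supports and $\PP$ is full-dimensional (Proposition~\ref{prop.dimension}), \eqref{in.convex.inclusion} for $S$ cannot be facet-inducing.

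For the ``if'' direction, assume $S$ is minimal, and take $s\ge 2$ (the case $s=1$, in which $\x_j=\x_{i_1}$, reduces to a small separate verification). The first step is to record the geometry that minimality forces: $\x_S$ is affinely independent, $\conv(\x_S)$ is an $(s-1)$-simplex, $\x_j$ lies in its relative interior, and therefore the $s$ facets $\conv(\x_{S\setminus\{i_t\}})$, $t\in[s]$, have empty common intersection --- so for \emph{every} point $\y\in\R^d$ there is an index $t$ with $\y\notin\conv(\x_{S\setminus\{i_t\}})$. Let $F$ be the face of $\PP$ induced by \eqref{in.convex.inclusion}. Since $\PP$ is full-dimensional, it suffices to show that any linear equation valid on all of $F$ is a scalar multiple of \eqref{in.convex.inclusion}. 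All test solutions will be variants of the null point $\nil$ of Proposition~\ref{prop.dimension} forced onto $F$; the canonical one, call it $\bar w$, is obtained from $\nil$ by setting $z_{ik}=1$ for $i\in S$ and $q_{k\ell}=-1$ for $\ell\in L_R$, which is feasible (the blue constraints of group $k$ become $-1\le -1$) and lies on $F$.

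Step A removes the $\p$- and $q$-coefficients of the equation: from $\bar w$, perturbing a single $q_{k'\ell'}$ or $\p_{k'\ell' a}$ with $k'\ne k$, then perturbing $q_{k\ell}$ downward, then perturbing $\p_{k\ell a}$ together with a compensating decrease of $q_{k\ell}$ (so the $s$ binding blue constraints of group $k$ stay satisfied), produces points that remain on $F$ (their $z$-part is unchanged), and evaluation forces all those coefficients to vanish. Step B determines the $z$-coefficients through flips of $\bar w$, each kept on $F$ (the left-hand side of \eqref{in.convex.inclusion} stays equal to $s$): activating $z_{i'k'}$ for $i'\in B\setminus S$ is feasible and gives $a_{i'k'}=0$; for $i'\in S$ and $\ell\in L_R$, deactivating $z_{i'k}$ while activating $z_{j\ell}$ is feasible exactly because $\x_j\notin\conv(\x_{S\setminus\{i'\}})$ and yields $a_{j\ell}=a_{i'k}$, hence a common value $\mu$, and additionally re-assigning $\x_{i'}$ to a blue group $k'\ne k$ (feasible since $\x_{i'}\ne\x_j$, as $s\ge 2$) then forces $a_{i'k'}=0$; and for $j'\in R\setminus\{j\}$ and $\ell'\in L_R$, starting from the assignment $S\setminus\{i_t\}\mapsto k$, $\x_j\mapsto\ell_1$ with $t$ chosen (by the empty-intersection property) so that $\x_{j'}\notin\conv(\x_{S\setminus\{i_t\}})$, we may activate $z_{j'\ell'}$ for a fresh group $\ell'\ne\ell_1$ --- here the hypothesis $|L_R|\ge2$ is used --- which gives $a_{j'\ell'}=0$. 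These flips pin down all $z$-coefficients; reading the equation at $\bar w$ gives the constant $s\mu$, so the equation is $\mu$ times \eqref{in.convex.inclusion}, and $F$ is a facet.

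I expect Step B's treatment of the ``foreign'' red variables $z_{j'\ell'}$ to be the main obstacle: bounding their coefficients requires exhibiting feasible assignments on $F$ in which $\x_{j'}$ can be placed without overlapping group $k$, and this is precisely where minimality is indispensable --- the simplex structure of $\conv(\x_S)$ and the empty common intersection of its facets make a suitable index $t$ available for any $\x_{j'}$. Without minimality such assignments need not exist, which is consistent with the ``only if'' direction proved above.
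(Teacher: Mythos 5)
Your proof is correct and follows essentially the same route as the paper's: the facet direction builds the same family of points on the face (all of $S$ assigned to group $k$, hyperplane perturbations to kill the $\p$- and $q$-coefficients, swapping one $\x_{i'}$ for $\x_j$ via minimality, and parking a foreign red point $\x_{j'}$ in a second red group, which is exactly where $|L_R|\ge 2$ enters), your empty-common-intersection-of-simplex-facets observation is the same use of the affine independence of $\x_S$ as the paper's two-case analysis, and your only-if argument (writing \eqref{in.convex.inclusion} for $S$ as the $S'$-inequality plus the bounds $z_{ik}\le 1$) is a repackaging of the paper's observation that $z_{ik}=1$ holds on the whole face, combined with full-dimensionality. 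The only point left open is the deferred case $s=1$ (which forces $\x_j=\x_{i_1}$), a degenerate situation that the paper's own proof also glosses over.
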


	\begin{proof}
	Assume first that ${S}$ is minimal w.r.t.~the property $\x_{j} \in \conv(\x_{S})$. Let $F$ be the face of $\PP$ defined by
	\eqref{in.convex.inclusion}, and let $({\bl},\mu,\gamma)$ and $\lambda_0$ be such that ${\bl} \p + \mu q + \gamma z = \lambda_0$ for
	every $(\p,q,z) \in F$. We shall verify that $({\bl},\mu,\gamma)$ is a multiple of the coefficient vector of \eqref{in.convex.inclusion}, thus
	showing that $F$ is a facet of $\PP$. To this end, let $\base = (\zeros, q, z)$ be the solution obtained by setting $z_{ik} = 1$ for every $i\in
	{S}$, and all the remaining $z$-variables to 0 ({\em i.e.}, all the points in $\X \backslash \x_{S}$ are outliers). We set $q_{k\ell} = -2$ for all
	$\ell\in L_R$, and the remaining $q$-variables are set to 0. Note that constraints \eqref{eq.formulation.blue} for $i\in {S}$ and corresponding to
	the group{s $k$ and $\ell$} are strictly satisfied \edit{({\em i.e.}, without equality)}. This point is feasible and satisfies
	\eqref{in.convex.inclusion} with equality.

\begin{claims}
%	\noindent\mathversion{bold}\textbf{Claim 1: $\lambda = 0$.}\mathversion{normal}
\clitem{$\lambda = 0$.}\label{it.claim1.1}
	For $k'\in L_B$, $\ell\in L_R$, and ${a}\in[d]$, consider the solution $\base + \varepsilon \unitp_{k'\ell {a}}$, which is
	feasible if $\varepsilon$ is small enough. Indeed, constraints \eqref{eq.formulation.blue} for $k = k'$ and $i\in {S}$ are satisfied since they
	are strictly satisfied by $\base$. In this solution only one group is nonempty, and the solution satisfies \eqref{in.convex.inclusion} with equality. Together with $\base$, the existence of this solution implies $\lambda_{k'\ell {a}} = 0$. $\Diamond$

%	\noindent\mathversion{bold}\textbf{Claim 2: $\mu = 0$.}\mathversion{normal}
\clitem{$\mu = 0$.}\label{it.claim1.2}
	Similarly, for $k'\in L_B$ and $\ell\in L_R$, the solution $\base + \varepsilon \unitq_{k'\ell}$ is feasible (for a small
	enough $\varepsilon$) and also satisfies \eqref{in.convex.inclusion} with equality. Again, the combination of this solution with $\base$ implies
	$\mu_{k'\ell} = 0$. $\Diamond$

%	\noindent\mathversion{bold}\textbf{Claim 3: $\gamma_{ik} = \gamma_{j\ell}$ for $i\in P$ and $\ell\in L_R$.}\mathversion{normal}
\clitem{$\gamma_{ik} = \gamma_{j\ell}$ for $i\in {S}$ and $\ell\in L_R$.}\label{it.claim1.3}
	Let ${S}' := {S}\backslash\{i\}$ and consider the solution $\base_{i\ell}$ constructed by setting $z_{i'k} = 1$ for $i'\in {S}'$,
	$z_{j\ell} = 1$, and the remaining $z$-variables to 0 (including $z_{ik} = 0$). Since $\x_{j} \not \in \conv(\x_{{S}'})$, then there exists a
	hyperplane ${\bp \y} = \pi_0$ separating $\x_j$ from $\x_{{S}'}$. Let $\p_{k\ell} = {\bp}$ and $q_{k\ell} = - \pi_0$, \edit{set $q_{k\ell'}
	= -2$ for all $\ell' \ne \ell$,} and set the remaining $\p$- and $q$-variables to 0. The solution $\base_{i\ell}$ thus constructed is
	feasible and satisfies \eqref{in.convex.inclusion} with
	equality, hence $({\bl},\mu,\gamma) \base = ({\bl},\mu,\gamma) \base_{i\ell}$. Since ${\bl} = \bm{0}$ and $\mu = 0$, we conclude
	that $\gamma_{ik} = \gamma_{j\ell}$. $\Diamond$

%	\noindent\mathversion{bold}\textbf{Claim 4: $\gamma_{j'\ell} = 0$ for $j'\ne j$ and $\ell\in L_R$.}\mathversion{normal}
\clitem{$\gamma_{j'\ell} = 0$ for $j'\ne j$ and $\ell\in L_R$.}\label{it.claim1.4}
	If $\x_{j'}$ is not contained in $\conv(\x_{S})$, then there exists a hyperplane separating $\x_{j'}$ from $\x_{S}$. Otherwise, if
	$\x_{j'}$ is contained in $\conv(\x_{S})$, since the points in $\x_{S}$ are affinely independent (due to the minimality of ${S}$),
	then there exists ${i} \in {S}$ such that $\x_{j'} \notin \conv(\x_{S} \backslash\{\x_{i}\})$. In both cases,
	there exists some ${i} \in {S}$ such that we can find a hyperplane separating $\x_{j'}$ from $\conv(\x_{S}
	\backslash\{\x_{i}\})$.
	Construct a solution $\base_{j'\ell}$ by setting $z_{i'k} = 1$ for all $i'\in {S}\backslash \{{i}\}$, $z_{j'\ell'} = 1$, and
	$z_{j\ell} = 1$ for any $\ell \in L_R \setminus \{\ell'\}$ (recall that $|L_R| \ge 2$), and the remaining $z$-variables to 0.
	Finally, set $(\p_{k\ell}, q_{k\ell})$ equal to the hyperplane separating $\x_{S} \backslash\{\x_{i}\}$ from $\x_j$, and $(\p_{k\ell'},
	q_{k\ell'})$ equal to the hyperplane separating $\x_{S} \backslash\{\x_{i}\}$ from $x_{j'}$. This solution is feasible and satisfies
	\eqref{in.convex.inclusion} with equality. By resorting to Claim~\ref{it.claim1.1}, Claim~\ref{it.claim1.2}, and Claim~\ref{it.claim1.3}, the
	existence of $\base$ and $\base_{j'\ell'}$ shows that $\gamma_{j'\ell'} = 0$. $\Diamond$

%	\noindent\mathversion{bold}\textbf{Claim 5: $\gamma_{ik'} = 0$ for $i\in P$ and $k'\ne k$.}\mathversion{normal}
\clitem{$\gamma_{ik'} = 0$ for $i\in {S}$ and $k'\ne k$.}\label{it.claim1.5}
	Take any $\ell \in L_R$ and consider the solution $\base_{k'}$ obtained from $\base - \unitz_{ik} + \unitz_{ik'} + \unitz_{j\ell}$ by setting
	$(\p_{k\ell}, q_{k\ell})$ equal to the hyperplane separating $\x_{S} \backslash\{\x_i\}$ from $\x_j$, \edit{and $(\p_{k'\ell}, q_{k'\ell})$
	equal to the hyperplane separating $\x_i$ from $\x_j$}. This solution is feasible and satisfies \eqref{in.convex.inclusion} with equality.
	By resorting to the previous claims, the existence of $\base_{k'}$ and $\base$ shows that $\gamma_{ik'} = 0$. $\Diamond$

%	\noindent\mathversion{bold}\textbf{Claim 6: $\gamma_{i'k'} = 0$ for $i'\not\in P$ and $k'\in L_B$.}\mathversion{normal}
\clitem{$\gamma_{ik'} = 0$ for $i\not\in {S}$ and $k'\in L_B$.}\label{it.claim1.6}
	Similarly to the proof of Claim~\ref{it.claim1.5}, the solution $\base_{ik'} = \base + \unitz_{ik'}$ is feasible and satisfies
	\eqref{in.convex.inclusion} with equality, so $\gamma_{ik'} = 0$. $\Diamond$
\end{claims}
	By combining these claims, we conclude that $({\bl},\mu,\gamma)$ is a multiple of the coefficient vector of \eqref{in.convex.inclusion}, hence
	$F$ is a facet of $\PP$.

	For the converse direction, suppose that $\x_{j} \in \conv(\x_{S} \backslash \{\x_{i}\})$ for some $i\in {S}$. This implies that any
	solution having $z_{ik} = 0$ cannot satisfy \eqref{in.convex.inclusion} with equality, since such a solution must have $z_{i'k} = 1$ for every $i'\in
	{S}\backslash\{i\}$ and $z_{j\ell} = 1$ for some $\ell \in L_R$, in order to attain equality. However, since $\x_{j} \in \conv(\x_{S} \backslash \{\x_{i}\})$ then no hyperplane separating $\x_j$ from $\conv(\x_{S} \backslash \{\x_{i}\})$ exists, hence such a solution cannot be feasible. This implies that every solution in the face of $\PP$ induced by \eqref{in.convex.inclusion} satisfies $z_{ik} = 1$ and, since $\PP$ is full-dimensional, such a face is not a facet of $\PP$.
	\end{proof}

	The symmetrical inequalities considering a point $\x_i \in \X_B$ and a set of points in $\X_R$ whose convex hull includes $\x_i$ have the same properties as \eqref{in.convex.inclusion}.

\subsection{Obstacle inequalities}

	Given two distinct points ${\y_1}, {\y_2} \in \R^d$, a set ${Y} \subseteq \R^d$ is an \textit{obstacle}
	between ${\y_1}$ and ${\y_2}$ if $\conv({Y}) \cap \conv(\{{\y_1}, {\y_2}\}) \neq \emptyset$ (see
	Figure~\ref{fig.obstacle}). We say that ${Y}$ is a \emph{trivial obstacle} if ${\y_1} \in \conv({Y})$ or
	${\y_2} \in \conv({Y})$, and that ${Y}$ is a \emph{minimal obstacle} if $\conv({Y}\setminus\{{\y}_i\}) \cap
	\conv(\{{\y_1}, {\y_2}\}) = \emptyset$, for every ${\y}_i \in {Y}$. \edit{We denote by $\aff(Y)$ the affine space
	generated by the points in $Y$.}

	The presence of an obstacle \edit{that is a subset of $\X_B$} between two points of $\X_R$ implies that $\X_B$ and $\X_R$ are not linearly separable. However, it is
	interesting to remark that the converse is not true. There may not exist an obstacle between two points of $\X_R$ even when $\X_B$ and $\X_R$ are not linearly separable; although in such a case an obstacle will exist between some pair of points of $\conv(\X_R)$ (recall that $\X_B$ and $\X_R$ are not polytopes but finite sets of points). An example of this is given by the sets
	\begin{align*}
	\X_B &= \{ (1,1,0,0), (-2,1,0,0), (1,-2,0,0) \}\\
	\X_R &= \{ (0,0,1,1), (0,0,-2,1), (0,0,1,-2) \}
	\end{align*}
	for which $\conv(\X_B) \cap \conv(\X_R) = \{(0,0,0,0)\}$.

% 	\begin{figure}
% 		\center
% 		\includegraphics[height=3cm]{obstacle.pdf} \\
% 		\caption{\label{fig.obstacle}An obstacle $P = \{\x_{i_1}, \ldots, \x_{i_5}\}$ between the points $\x_{j_1}$ and $\x_{j_2}$.}
% 	\end{figure}

	\pgfmathsetmacro{\sincoordl}{2*sin(72)}
	\pgfmathsetmacro{\coscoordl}{2*cos(72)}
	\pgfmathsetmacro{\sincoords}{2*sin(36)}
	\pgfmathsetmacro{\coscoords}{2*cos(36)}

	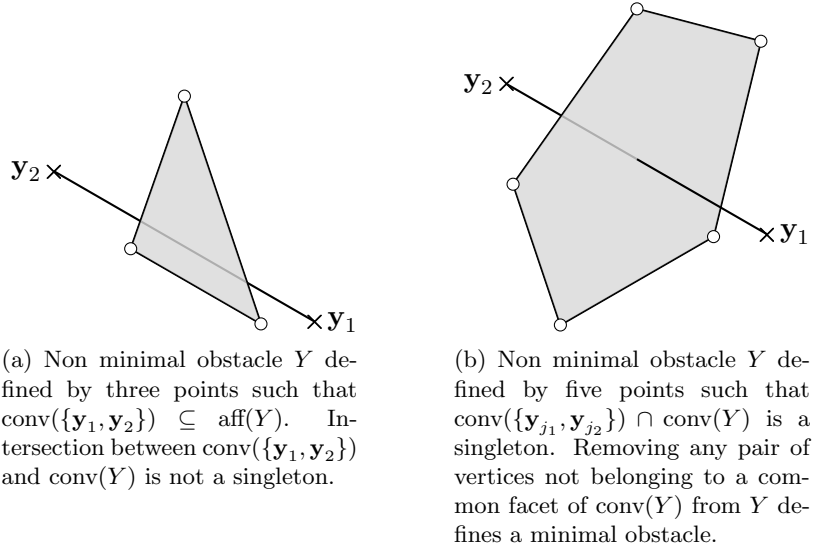
\begin{figure}
	\centering
	\begin{subfigure}[Non minimal obstacle $Y$ defined by three points such that $\conv(\{ \y_1, \y_2 \})
	\subseteq \aff(Y)$. Intersection between $\conv(\{ \y_1, \y_2 \})$ and $\conv(Y)$
	is not a singleton.]{ \label{fig.minimalo} \begin{tikzpicture}[x={(0.866cm,-0.5cm)}, y={(0.866cm,0.5cm)}, z={(0cm,1cm)}, scale=1.0,
    %Option for nice arrows
    >=stealth, %
    inner sep=0pt, outer sep=2pt,%
    axis/.style={thick,color=black, draw opacity=1},
    obstacle/.style={fill=black!15!white, fill opacity=0.8},
]
    \draw[axis,-] (-2,0,0) -- (2, 0, 0);

	\coordinate (o1) at (-2, \sincoords,  -\coscoords);
	\coordinate (o3) at (0, 0, 2);
	\coordinate (o5) at (0, \sincoords, -\coscoords);
    \draw[axis] (o1)  -- (o3) -- (o5) -- cycle;
    \draw[obstacle] (o1)  -- (o3) -- (o5) -- cycle;

    \foreach \i in {1,3,5}
	  	\draw[fill=white,fill opacity=1] (o\i) circle (0.8mm);
    \draw (2, 0, 0) node {$\times$} node {$\times$} node {$\times$} node[right,xshift=0.1cm] {$\y_1$};
    \draw (-2, 0, 0) node {$\times$} node {$\times$} node {$\times$} node[left,xshift=-0.1cm] {$\y_2$};
	\end{tikzpicture}}
	\end{subfigure}\quad\quad\quad
	\begin{subfigure}[Non minimal obstacle $Y$ defined by five points such that $\conv(\{\y_{j_1}, \y_{j_2}\}) \cap \conv(Y)$ is a singleton. Removing
	any pair of vertices not belonging to a common facet of $\conv(Y)$ from $Y$ defines a minimal obstacle.]{ \label{fig.nonminimalo} \begin{tikzpicture}[x={(0.866cm,-0.5cm)}, y={(0.866cm,0.5cm)},
	z={(0cm,1cm)}, scale=1.0,
    %Option for nice arrows
    >=stealth, %
    inner sep=0pt, outer sep=2pt,%
    axis/.style={thick,color=black, draw opacity=1},
    obstacle/.style={fill=black!15!white, fill opacity=0.8},
]
    \draw[axis,-] (-2,0,0) -- (0, 0, 0);

	\coordinate (o1) at (0, -\sincoords,  -\coscoords);
	\coordinate (o2) at (0, -\sincoordl, \coscoordl);
	\coordinate (o3) at (0, 0, 2);
	\coordinate (o4) at (0, \sincoordl, \coscoordl);
	\coordinate (o5) at (0, \sincoords, -\coscoords);
    \draw[axis] (o1)  -- (o2) -- (o3) -- (o4) -- (o5) -- cycle;
    \draw[obstacle] (o1)  -- (o2) -- (o3) -- (o4) -- (o5) -- cycle;

    \foreach \i in {1,...,5}
	  	\draw[fill=white,fill opacity=1] (o\i) circle (0.8mm);
    \draw (2, 0, 0) node {$\times$} node {$\times$} node {$\times$} node[right,xshift=0.1cm] {$\y_1$};
    \draw (-2, 0, 0) node {$\times$} node {$\times$} node {$\times$} node[left,xshift=-0.1cm] {$\y_2$};
    \draw[axis,-] (0,0,0) -- (2, 0, 0);
	\end{tikzpicture}}
	\end{subfigure}
		\caption{\label{fig.obstacle}Two obstacles (and the convex hull of its points) between the points $\y_1$ and $\y_2$ in $\mathbb{R}^3$.}
	\end{figure}

	\begin{proposition}
	Let $S = \{i_1, \ldots, i_s\} \subseteq B$ be such that $\x_S$ is an
	obstacle between two points $\x_{j_1}, \x_{j_2} \in \X_R$, $j_1\ne j_2$. For $k \in L_B$ and $\ell \in L_R$, the \emph{obstacle inequality}
	\begin{equation}
	z_{j_1\ell} + z_{j_2\ell} + \sum_{i\in {S}} z_{ik} \leq {s} + 1
	\label{in.obstacle}
	\end{equation}
	is valid for $\PP$.
	\end{proposition}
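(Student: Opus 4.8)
The plan is to show that no point of $\PP$ can have all $s+2$ binary variables appearing on the left-hand side of \eqref{in.obstacle} simultaneously equal to $1$; since the right-hand side is $s+1$ and each variable is $0/1$, this establishes validity. So I would argue by contradiction: suppose $(\p,q,z)\in\PP$ satisfies $z_{ik}=1$ for every $i\in S$ and $z_{j_1\ell}=z_{j_2\ell}=1$.

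First I would specialize the big-$M$ constraints to this configuration. Constraint \eqref{eq.formulation.blue} for each $i\in S$ and the pair $(k,\ell)$ collapses to $\p_{k\ell}\x_i+q_{k\ell}\le M-(M+1)=-1$, while constraint \eqref{eq.formulation.red} for $j_1$ (and likewise for $j_2$) and the pair $(k,\ell)$ collapses to $\p_{k\ell}\x_{j_1}+q_{k\ell}\ge -M+(M+1)=1$. Thus $(\p_{k\ell},q_{k\ell})$ would separate $\x_S$ from $\{\x_{j_1},\x_{j_2}\}$ in precisely the strong ($\pm 1$) sense used to characterize linear separability in Section~\ref{sec.problem}.

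Next I would exploit the obstacle hypothesis: there is a point $v\in\conv(\x_S)\cap\conv(\{\x_{j_1},\x_{j_2}\})$, say $v=\sum_{i\in S}\alpha_i\x_i=\beta_1\x_{j_1}+\beta_2\x_{j_2}$ with $\alpha_i\ge 0$, $\beta_1,\beta_2\ge 0$, $\sum_{i\in S}\alpha_i=1$ and $\beta_1+\beta_2=1$. Taking the corresponding convex combination of the inequalities above yields $\p_{k\ell}v+q_{k\ell}=\sum_{i\in S}\alpha_i(\p_{k\ell}\x_i+q_{k\ell})\le -1$ on one side, and $\p_{k\ell}v+q_{k\ell}=\beta_1(\p_{k\ell}\x_{j_1}+q_{k\ell})+\beta_2(\p_{k\ell}\x_{j_2}+q_{k\ell})\ge 1$ on the other, a contradiction. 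Hence the assumed configuration is infeasible and \eqref{in.obstacle} is valid for $\PP$.

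I do not anticipate any substantive obstacle in this argument; it is a direct generalization of the validity proof of the convex-inclusion inequality \eqref{in.convex.inclusion}, with the single point $\x_j$ there replaced by the segment $\conv(\{\x_{j_1},\x_{j_2}\})$. The only detail worth a remark is that the two convex representations of $v$ need not be unique, but this is immaterial: fixing any one choice of the multipliers $\alpha$, $\beta_1$, $\beta_2$ already produces the contradiction.
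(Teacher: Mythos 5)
Your proof is correct and follows essentially the same route as the paper: both reduce validity to ruling out the configuration where all $s+2$ variables equal $1$, and both observe that this would force $(\p_{k\ell},q_{k\ell})$ to separate $\x_S$ from $\{\x_{j_1},\x_{j_2}\}$, which is impossible because their convex hulls intersect. The paper states the impossibility of such a separating hyperplane without spelling out the convex-combination contradiction, which you make explicit; this is just a more detailed rendering of the same argument.
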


	\begin{proof}
	Let $(\p,q,z)\in\PP$ be a feasible solution. Since the left-hand side of \eqref{in.obstacle} contains ${s}+2$ binary variables, we
	need only consider the case $z_{j_1\ell} = z_{j_2\ell} = 1$ and $z_{ik} = 1$ for every $i\in {S}$. These variable values imply that $\x_{j_1}$
	and $\x_{j_2}$ are assigned to the group $\ell$, whereas all the points in $\x_S$ are assigned to the group $k$. This is not possible in a feasible solution, since $\conv(\x_S) \cap \conv(\{\x_{j_1}, \x_{j_2}\}) \neq \emptyset$ implies that there is no hyperplane separating $\{\x_{j_1}, \x_{j_2}\}$ from $\x_S$. Hence, $(\p,q,z)$ satisfies \eqref{in.obstacle}. Since $(\p,q,z)$ is an arbitrary solution, then \eqref{in.obstacle} is valid for $\PP$.
	\end{proof}

	\edit{We now explore the facetness of the obstacle inequalities. To this end, we first state the following preliminary lemmas.}

	\edit{
	\begin{lemma} \label{lem.affine}
	Let $S \subseteq B$ be such that $\x_S$ is an obstacle between $\x_{j_1}, \x_{j_2} \in \X_R$, $j_1\ne j_2$. If $\conv(\{ \x_{j_1}, \x_{j_2} \})
	\subseteq \aff(\x_S)$, then $\x_S$ is trivial or non minimal.
	\end{lemma}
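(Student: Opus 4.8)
The plan is to prove the contrapositive: assuming $\x_S$ is simultaneously a \emph{non-trivial} obstacle and a \emph{minimal} obstacle between $\x_{j_1}$ and $\x_{j_2}$, I will contradict the hypothesis $\conv(\{\x_{j_1},\x_{j_2}\})\subseteq\aff(\x_S)$. A first routine reduction: minimality forces every point of $\x_S$ to be a vertex of the polytope $\conv(\x_S)$, because if some $\x_{i_0}\in\x_S$ were a convex combination of the others then $\conv(\x_S\setminus\{\x_{i_0}\})=\conv(\x_S)$ would still meet $\conv(\{\x_{j_1},\x_{j_2}\})$, violating minimality. Non-triviality will be used in the form $\x_{j_1},\x_{j_2}\notin\conv(\x_S)$, and we may also assume $\dim\conv(\x_S)\ge 1$, since otherwise $\x_S$ is a single (repeated) point and the hypothesis forces $\x_{j_1}=\x_{j_2}$, a trivial obstacle.

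The core of the argument takes place inside $A:=\aff(\x_S)$. Let $\ell$ be the line through $\x_{j_1}$ and $\x_{j_2}$; since $A$ is affine and contains both points, $\ell\subseteq A$. Pick $p\in\conv(\x_S)\cap\conv(\{\x_{j_1},\x_{j_2}\})$, which exists because $\x_S$ is an obstacle. Then $\conv(\x_S)\cap\ell$ is a compact convex subset of the line $\ell$, hence a closed segment $[a,b]$ (possibly a single point) containing $p$. I claim $[a,b]\subseteq\conv(\{\x_{j_1},\x_{j_2}\})$: parametrising $\ell$ by $\R$, the parameters of $\x_{j_1}$ and $\x_{j_2}$ both lie outside $[a,b]$ by non-triviality, while the parameter of $p$ lies in $[a,b]$ and between those of $\x_{j_1}$ and $\x_{j_2}$ (as $p\in\conv(\{\x_{j_1},\x_{j_2}\})$); this rules out $\x_{j_1},\x_{j_2}$ lying on the same side of $[a,b]$, so one of them is below $a$ and the other above $b$, whence $\conv(\{\x_{j_1},\x_{j_2}\})\supseteq[a,b]$. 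In particular the point $a$ (take $a=p$ if $[a,b]$ is a singleton) lies in $\conv(\{\x_{j_1},\x_{j_2}\})$, and it lies on the relative boundary of $\conv(\x_S)$: an endpoint of $\conv(\x_S)\cap\ell$ cannot be a relative-interior point of $\conv(\x_S)$, since $\ell\subseteq A$ would then run through $\conv(\x_S)$ on both sides of it.

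To conclude, I use that a relative-boundary point of the polytope $\conv(\x_S)$ lies on some proper face $G$. Because every point of $\x_S$ is a vertex and $G$ is proper, $G$ omits at least one $\x_i\in\x_S$; and $G$, being a face of $\conv(\x_S)$, is the convex hull of the vertices it contains, so $G\subseteq\conv(\x_S\setminus\{\x_i\})$. Therefore $\conv(\x_S\setminus\{\x_i\})$ contains $a\in\conv(\{\x_{j_1},\x_{j_2}\})$, so $\x_S\setminus\{\x_i\}$ is still an obstacle between $\x_{j_1}$ and $\x_{j_2}$ — contradicting minimality, which proves the lemma. The only step needing genuine care is the middle one: it is not enough that $\ell$ meets $\conv(\x_S)$; one must locate an intersection point that simultaneously lies in the \emph{segment} $\conv(\{\x_{j_1},\x_{j_2}\})$ and on the relative boundary of $\conv(\x_S)$, and this is precisely where the hypothesis $\conv(\{\x_{j_1},\x_{j_2}\})\subseteq\aff(\x_S)$ enters (in tandem with non-triviality). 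The degenerate case $\conv(\x_S)\cap\ell=\{p\}$ and the reduction to all points of $\x_S$ being vertices are minor points; the rest is routine convex geometry.
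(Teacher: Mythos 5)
Your proof is correct and follows essentially the same route as the paper's: both arguments use nontriviality together with the hypothesis $\conv(\{\x_{j_1},\x_{j_2}\})\subseteq\aff(\x_S)$ to locate a point of the segment on a proper face (facet) of $\conv(\x_S)$, and then observe that the vertices of that face form a proper subset of $\x_S$ which is still an obstacle, contradicting minimality. Your explicit one-dimensional parametrization showing that an endpoint of $\conv(\x_S)\cap\ell$ lies both in the segment and on the relative boundary is simply a more detailed justification of the facet-existence step that the paper treats tersely.
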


	\begin{proof}
	Assume that $\x_S$ is nontrivial. Let $F$ be a facet of $\conv(\x_S)$ containing a point of $\conv(\{\x_{j_1}, \x_{j_2}\})$.
	Such a facet exists since otherwise $\conv(\{\x_{j_1}, \x_{j_2}\}) \cap \conv(\x_S)$ is a
	polyhedron defined by a system of linear equations (this implies that $\conv(\{\x_{j_1}, \x_{j_2}\}) \cap \conv(\x_S)$ is a singleton like in
	Figure~\ref{fig.nonminimalo}), contradicting either the hypothesis that $\x_S$ is nontrivial or the hypothesis asserting that $\conv(\{ \x_{j_1},
	\x_{j_2} \}) \subseteq \aff(\x_S)$. Therefore, the set of vertices of $F$ is a proper subset of $\x_S$ and forms an obstacle between $\x_{j_1}$ and $\x_{j_2}$.
	\end{proof}
	}

	\begin{lemma} \label{lem.obstacle}
	Let $S \subseteq B$ be such that $\x_S$ is a
	nontrivial \edit{minimal} obstacle between $\x_{j_1}, \x_{j_2} \in \X_R$, $j_1\ne j_2$. Let $i\in B {\setminus S}$, and define
	$S' = S \cup \{i\}$. Then, $\x_{j_1} \not\in\conv({\x_{S'}})$ or $\x_{j_2} \not\in\conv({\x_{S'}})$.
	\end{lemma}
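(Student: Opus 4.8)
The plan is to proceed by contradiction, assuming that both $\x_{j_1}\in\conv(\x_{S'})$ and $\x_{j_2}\in\conv(\x_{S'})$, and to contradict the hypothesis that $\x_S$ is a nontrivial minimal obstacle between $\x_{j_1}$ and $\x_{j_2}$. The only structural fact I would use about $\conv(\x_{S'})$ is the elementary ``pyramid'' description: since $\x_{S'}=\x_S\cup\{\x_i\}$, every point of $\conv(\x_{S'})$ has the form $(1-\theta)y+\theta\x_i$ with $y\in\conv(\x_S)$ and $\theta\in[0,1]$. Applying this to $\x_{j_1}$ and $\x_{j_2}$, write $\x_{j_t}=(1-\theta_t)y_t+\theta_t\x_i$ for $t\in\{1,2\}$, with $y_t\in\conv(\x_S)$ and $\theta_t\in[0,1]$; nontriviality of $\x_S$ gives $\x_{j_t}\notin\conv(\x_S)$, so $\theta_1,\theta_2>0$. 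Likewise, fix a point $p\in\conv(\x_S)\cap\conv(\{\x_{j_1},\x_{j_2}\})$ witnessing that $\x_S$ is an obstacle, and write $p=\beta_1\x_{j_1}+\beta_2\x_{j_2}$ with $\beta_1,\beta_2\ge 0$ and $\beta_1+\beta_2=1$; nontriviality again forbids $\beta_t=0$ (which would force $\x_{j_{3-t}}\in\conv(\x_S)$), so $\beta_1,\beta_2>0$.

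Next I would substitute the expressions for $\x_{j_1}$ and $\x_{j_2}$ into $p=\beta_1\x_{j_1}+\beta_2\x_{j_2}$ and collect the $\x_i$-term, obtaining $p=\beta_1(1-\theta_1)\,y_1+\beta_2(1-\theta_2)\,y_2+\tau\x_i$, where $\tau:=\beta_1\theta_1+\beta_2\theta_2$ and the remaining coefficients sum to $1-\tau$. Since $\theta_1,\theta_2\le 1$ and $\beta_1,\beta_2>0$ we have $\tau\le 1$, and I would split on this. If $\tau=1$, then necessarily $\theta_1=\theta_2=1$, so $\x_{j_1}=\x_i=\x_{j_2}$ and hence $p=\x_i\in\conv(\x_S)$, contradicting nontriviality. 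If $\tau<1$, set $y:=\tfrac{\beta_1(1-\theta_1)}{1-\tau}y_1+\tfrac{\beta_2(1-\theta_2)}{1-\tau}y_2\in\conv(\x_S)$, so that $p=(1-\tau)y+\tau\x_i$ with $0<\tau<1$; solving for $\x_i$ exhibits it as an affine combination of $p$ and $y$, both of which lie in $\aff(\x_S)$, so $\x_i\in\aff(\x_S)$. Then $\aff(\x_{S'})=\aff(\x_S)$, hence $\x_{j_1},\x_{j_2}\in\conv(\x_{S'})\subseteq\aff(\x_S)$, i.e. $\conv(\{\x_{j_1},\x_{j_2}\})\subseteq\aff(\x_S)$; by Lemma~\ref{lem.affine} this would make $\x_S$ trivial or non minimal — a contradiction. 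In both cases the assumption fails, which proves the lemma.

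The step I expect to be the crux is not any single computation but the careful tracking of where nontriviality enters: it is exactly what upgrades the bounds to $\theta_1,\theta_2>0$ and $\beta_1,\beta_2>0$, and without these the case split degenerates. The reduction of the case $\tau<1$ to Lemma~\ref{lem.affine} is the only place where minimality of $\x_S$ is used, and it is essential — for a merely nontrivial, non-minimal obstacle the statement can fail (adding a point coplanar with a flat obstacle may engulf both $\x_{j_1}$ and $\x_{j_2}$). A self-contained alternative I would keep in reserve is to first show that a minimal obstacle $\x_S$ is affinely independent (otherwise perturb a convex representation of $p$ along an affine dependence of $\x_S$ until a coefficient vanishes, contradicting minimality), then rule out $\x_i\in\aff(\x_S)$ as above so that $\x_{S'}$ is affinely independent, and finally use uniqueness of the affine representation over $\x_{S'}$: the point $p$ has $\x_i$-coefficient $0$, yet $\beta_1\x_{j_1}+\beta_2\x_{j_2}$ has $\x_i$-coefficient $\beta_1\mu_0+\beta_2\nu_0>0$, where $\mu_0,\nu_0>0$ are the (positive, by nontriviality) apex weights of $\x_{j_1},\x_{j_2}$ in $\conv(\x_{S'})$, a contradiction.
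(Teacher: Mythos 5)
Your proof is correct, but it takes a genuinely different route from the paper's. The paper argues constructively: it first disposes of the easy case $\x_i\in\conv(\x_S)$, then uses nontriviality together with Lemma~\ref{lem.affine} to rule out $\x_{j_1}\in\aff(\x_S)$ or $\x_{j_2}\in\aff(\x_S)$, and in the remaining case takes a hyperplane containing $\aff(\x_S)$ that strictly separates $\x_{j_1}$ from $\x_{j_2}$; whichever side $\x_i$ lies on then tells you explicitly which of the two points is excluded from $\conv(\x_{S'})$. You instead argue by contradiction with a purely barycentric computation: the pyramid decomposition of $\conv(\x_S\cup\{\x_i\})$, positivity of the apex weights $\theta_1,\theta_2$ and of $\beta_1,\beta_2$ (both forced by nontriviality), and the resulting identity $p=(1-\tau)y+\tau\x_i$ either collapses to $\x_{j_1}=\x_{j_2}=\x_i\in\conv(\x_S)$ (when $\tau=1$) or places $\x_i\in\aff(\x_S)$, whence $\conv(\{\x_{j_1},\x_{j_2}\})\subseteq\aff(\x_S)$ and Lemma~\ref{lem.affine} yields the contradiction. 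Both proofs funnel minimality exclusively through Lemma~\ref{lem.affine}, so the logical dependencies match; what differs is the mechanism. The paper's separation argument is more geometric and tells you \emph{which} point escapes $\conv(\x_{S'})$ (information that is mildly useful to visualize the later facet constructions, though never formally needed), at the cost of an existence claim about a hyperplane through $\aff(\x_S)$ separating the two endpoints, which itself deserves a line of justification. Your version avoids any separating-hyperplane machinery, absorbs the case $\x_i\in\conv(\x_S)$ automatically, and is fully self-contained modulo Lemma~\ref{lem.affine}, but, being a proof by contradiction of a disjunction, it does not identify the excluded point. Your side remark is also apt: minimality (via Lemma~\ref{lem.affine}) is genuinely needed, since a flat non-minimal obstacle can have $\x_{S'}$ engulf both endpoints.
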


	\begin{proof}
	\edit{If $\x_i \in \conv(\x_S)$, then the lemma trivially holds. Thus, assume that $\x_i \not\in \conv(\x_S)$.
	\modif{If $\x_{j_2} \in \aff(\x_S)$, then $\x_{j_1}$ could be written as an affine combination of $\x_{j_2}$ and any point $\x$ in
	$\conv(\{\x_{j_1}, \x_{j_2}\}) \cap \conv({\x_S})$ (such an $\x \ne \x_{j_2}$ exists since $\x_S$ is a nontrivial obstacle between
	$\x_{j_1}$ and $\x_{j_2}$). The situation where $\x_{j_1} \in \aff(\x_S)$ is analogous with the roles of $\x_{j_1}$ and
	$\x_{j_2}$ interchanged. Therefore, $\x_{j_1} \in \aff(\x_S)$ or $\x_{j_2} \in \aff(\x_S)$ implies $\conv(\{ \x_{j_1}, \x_{j_2}
	\}) \subseteq \aff(\x_S)$, which contradicts Lemma~\ref{lem.affine}.}
	
	So assume $\x_{j_1},\x_{j_2} \not\in \aff(\x_S)$. It follows from $\conv(\{\x_{j_1}, \x_{j_2}\}) \cap \aff({\x_S}) \ne \emptyset$ that there exists a hyperplane $(\p, q)$ separating $\x_{j_1}$ from $\x_{j_2}$ such that $\p \x_{j_1} > q$,
	$\p \x_{j_2}
	< q$, and $\p \x = q$ for all $\x \in \aff(\x_S)$. Additionally, $\p \y > q$ or $\p \y < q$ is violated by $\x_i$ and by all points in
	$\conv({\x_{S}})$ simultaneously. In the former case, $\x_{j_1} \not\in\conv({\x_{S'}})$ since $\p \y \leq q$ holds for all points in
	$\conv({\x_{S'}})$ and, in the latter case, $\x_{j_2} \not\in\conv({\x_{S'}})$.}
	\end{proof}

	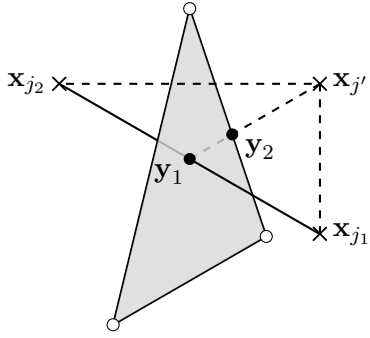
\begin{figure}
	\centering
	\begin{tikzpicture}[x={(0.866cm,-0.5cm)}, y={(0.866cm,0.5cm)},
	z={(0cm,1cm)}, scale=1.0,
    %Option for nice arrows
    >=stealth, %
    inner sep=0pt, outer sep=2pt,%
    axis/.style={thick,color=black, draw opacity=1},
    obstacle/.style={fill=black!15!white, fill opacity=0.8},
]
	\coordinate (o6) at (0, 0, 0);
	\coordinate (o7) at (0, 2, 0);
	\coordinate (o8) at (2, 0, 0);
	\coordinate (o9) at (-2, 0, 0);

    \draw[axis,-] (-2,0,0) -- (0, 0, 0);
    \draw[axis,dashed] (o6) -- (o7);

    \draw[axis] (o1) -- (o3) -- (o5) -- cycle;
    \draw[obstacle] (o1)  -- (o3) -- (o5) -- cycle;

    \foreach \i in {1,3,5}
	  	\draw[fill=white,fill opacity=1] (o\i) circle (0.8mm);
    \draw (o8) node {$\times$} node {$\times$} node {$\times$} node[right,xshift=0.1cm] {$\x_{j_1}$};
    \draw (o9) node {$\times$} node {$\times$} node {$\times$} node[left,xshift=-0.1cm] {$\x_{j_2}$};
    \fill[black] (o6) circle (0.8mm) node[left,yshift=-0.2cm] {$\y_1$};
    \draw (o7) node {$\times$} node {$\times$} node {$\times$} node[right,xshift=0.1cm] {$\x_{j'}$};
    \fill[black] (0,\sincoords/1.8,0) circle (0.8mm) node[right,xshift=0.1cm,yshift=-0.2cm] {$\y_2$};

    \draw[axis,-] (0,0,0) -- (2, 0, 0);
    \draw[axis,dashed] (o8) -- (o7) -- (o9);
	\end{tikzpicture}
		\caption{An obstacle (and the convex hull of its points) between the points $\x_{j_1}$ and $\x_{j_2}$ in $\mathbb{R}^3$.}
		\label{fig.intersing}
	\end{figure}

	\begin{lemma} \label{lem.separable}
	\edit{Let $S\subseteq B$ be such that $\x_S$ is a nontrivial minimal obstacle between $\x_{j_1}, \x_{j_2} \in \X_R$, $j_1\ne j_2$,
	and let $j'\in R$, $j'\ne j_1, j_2$. Then, there exists $i\in S$ such that $\conv(\{\x_{j_1}, \x_{j_2}, \x_{j'}\})$ and $\conv(\x_{S'})$
	are linearly separable, where $S' = S\backslash\{i\}$.}
	\end{lemma}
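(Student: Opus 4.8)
The plan is to prove the (equivalent, by the characterization of linear separability recalled in Section~\ref{sec.problem}) statement that $\conv(\x_{S\setminus\{i\}})\cap\conv(\{\x_{j_1},\x_{j_2},\x_{j'}\})=\emptyset$ for a suitable $i\in S$, which is the desired conclusion with $S'=S\setminus\{i\}$. First I would record the geometry of a nontrivial minimal obstacle. A Carath\'eodory argument forces $\x_S$ to be affinely independent: otherwise any point of $\conv(\x_S)$ meeting $\conv(\{\x_{j_1},\x_{j_2}\})$ would already lie in the convex hull of a proper affinely independent subset of $\x_S$, which would then be a smaller obstacle, contradicting minimality. Thus $\sigma:=\conv(\x_S)$ is a simplex whose facets are precisely the sets $\sigma_i:=\conv(\x_{S\setminus\{i\}})$, $i\in S$. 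By Lemma~\ref{lem.affine} the hypotheses exclude $\conv(\{\x_{j_1},\x_{j_2}\})\subseteq\aff(\x_S)$, so this segment meets the affine subspace $\aff(\x_S)$ in a single point $\bar\x$; minimality then forces $\bar\x$ into the relative interior of $\sigma$ (otherwise $\bar\x$ lies on some $\sigma_i$), and nontriviality forces $\bar\x$ strictly between $\x_{j_1}$ and $\x_{j_2}$ (otherwise $\x_{j_1}$ or $\x_{j_2}$ lies in $\conv(\x_S)$). In particular $\x_{j_1},\x_{j_2}\notin\aff(\x_S)$, since otherwise the whole segment would lie in $\aff(\x_S)$, again against Lemma~\ref{lem.affine}.

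Next, put $C:=\conv(\{\x_{j_1},\x_{j_2},\x_{j'}\})\cap\aff(\x_S)$. Since every $\sigma_i$ is contained in $\aff(\x_S)$, separating $\sigma_i$ from $\conv(\{\x_{j_1},\x_{j_2},\x_{j'}\})$ is the same as showing $\sigma_i\cap C=\emptyset$, so it suffices to locate $C$ relative to $\sigma$. The claim to establish here is that $C$ is either the single point $\bar\x$ or a segment having $\bar\x$ as an endpoint. Indeed, $\conv(\{\x_{j_1},\x_{j_2},\x_{j'}\})$ lies in the plane $\aff(\{\x_{j_1},\x_{j_2},\x_{j'}\})$, which is not contained in $\aff(\x_S)$ because $\x_{j_1}\notin\aff(\x_S)$; hence $\aff(\{\x_{j_1},\x_{j_2},\x_{j'}\})\cap\aff(\x_S)$ has dimension at most one, i.e.\ it is $\{\bar\x\}$ or a line $L$ through $\bar\x$. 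If it is $\{\bar\x\}$ (in particular whenever $\x_{j_1},\x_{j_2},\x_{j'}$ are collinear, or two of these points coincide), then $C=\{\bar\x\}$. Otherwise, the triangle $\conv(\{\x_{j_1},\x_{j_2},\x_{j'}\})$ lies in one of the two closed half-planes bounded by the line through its edge $\conv(\{\x_{j_1},\x_{j_2}\})$, and $\bar\x$ lies in the relative interior of that edge; since $L$ differs from the edge line (the latter is not inside $\aff(\x_S)$), $L$ crosses the edge line transversally at $\bar\x$ and hence meets the triangle in a segment (possibly degenerate) issuing from $\bar\x$, which is exactly $C$.

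Finally I would finish with a short ``walk along $C$'' argument inside the simplex. Let $w$ be the endpoint, other than $\bar\x$, of the sub-segment $C\cap\sigma$ of $C$ (so $w=\bar\x$ when $C=\{\bar\x\}$). Because $\bar\x$ lies in the relative interior of $\sigma$, the half-open segment from $\bar\x$ to $w$ stays in the relative interior of $\sigma$, hence is disjoint from every facet $\sigma_i$. If $w$ itself lies in the relative interior of $\sigma$, then $C$ is contained in the relative interior of $\sigma$ and misses all facets, so every $i\in S$ works. Otherwise $w$ lies on the relative boundary of $\sigma$; writing $w$ in barycentric coordinates with respect to $\x_S$ and letting $S''\subseteq S$ be its (nonempty, proper) support, one has $w\in\sigma_i$ if and only if $i\notin S''$. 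Since $C\cap\sigma_i=(C\cap\sigma)\cap\sigma_i\subseteq\{w\}\cap\sigma_i$ and $\sigma_i\subseteq\aff(\x_S)$, picking any $i\in S''$ yields $\sigma_i\cap\conv(\{\x_{j_1},\x_{j_2},\x_{j'}\})=\emptyset$, which is the required linear separability with $S'=S\setminus\{i\}$.

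The step I expect to be the crux is the claim that $\bar\x$ is an \emph{endpoint} of the chord $C$, not an interior point of it. Were $\bar\x$ interior to $C$, the segment would leave $\bar\x$ in two opposite directions inside $\sigma$ and could exit $\sigma$ through two relative-boundary points whose barycentric supports are disjoint, so that no single facet $\sigma_i$ avoids both exit points and the ``walk along $C$'' argument would fail. What rules this out is precisely nontriviality of the obstacle: it confines $\bar\x$ to the relative interior of the edge $\conv(\{\x_{j_1},\x_{j_2}\})$ of the triangle, a region past which the triangle does not extend, so $C$ can only emanate from $\bar\x$ on one side. The remaining low-dimensional configurations collapse $C$ to $\{\bar\x\}$ and are disposed of in a line.
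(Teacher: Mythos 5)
Your proof is correct and follows essentially the same route as the paper's: both arguments show that $\conv(\{\x_{j_1},\x_{j_2},\x_{j'}\})$ meets $\conv(\x_S)$ in a (possibly degenerate) segment having the piercing point of $\conv(\{\x_{j_1},\x_{j_2}\})$ as an endpoint --- a point which minimality places in the relative interior of the simplex $\conv(\x_S)$ --- and then choose a vertex $i$ in the barycentric support of the other endpoint. The only differences are bookkeeping: you pin down that chord via the half-plane argument inside $\aff(\{\x_{j_1},\x_{j_2},\x_{j'}\})$ where the paper instead inspects the intersections of the edges $\conv(\{\x_{j_1},\x_{j'}\})$ and $\conv(\{\x_{j_2},\x_{j'}\})$ with $\aff(\x_S)$, and you spell out the relative-interior/support reasoning for the final disjointness that the paper states tersely.
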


	\begin{proof}
	\edit{Lemma~\ref{lem.affine} and the fact that $\x_S$ is a nontrivial minimal obstacle yield that the intersection between
	$\conv(\{\x_{j_1}, \x_{j_2}\})$ and $\conv(\x_S)$ is a singleton, say $\{\y_1\}$ (this is so because a second point $\y'$ in
	$\conv(\{\x_{j_1}, \x_{j_2}\}) \cap \conv(\x_S)$ would result in $\conv(\{ \x_{j_1}, \x_{j_2} \})
	\subseteq \aff(\x_S)$). Resorting to similar arguments, we conclude that $\conv(\{\x_{j_1}, \x_{j'}\})
	\cap \aff(\x_S)$ is either empty or a singleton. The same applies to $\conv(\{\x_{j_2}, \x_{j'}\}) \cap \aff(\x_S)$. Additionally, if \modif{$\x_{j'}
	\not\in \aff(\x_S)$} then either $\conv(\{\x_{j_1}, \x_{j'}\}) \cap \aff(\x_S)$ or $\conv(\{\x_{j_2}, \x_{j'}\}) \cap \aff(\x_S)$ is empty. It
	follows that the intersection between $\conv(\{\x_{j_1}, \x_{j'}\}) \cup \conv(\{\x_{j_2}, \x_{j'}\})$ and $\aff(\x_S)$ is also a singleton, which means that
	$\conv(\{\x_{j_1}, \x_{j_2}, \x_{j'}\}) \cap \conv(\x_S) = \conv(\{\y_1,\y_2\})$, for some $\y_2\in\conv(\x_S)$, as illustrated in
	Figure~\ref{fig.intersing}.
	If $\y_2 \ne \y_1$ then there exists $i\in S$ such that $\y_2 \not\in \conv(\x_{S'})$, where $S' = S \backslash \{i\}$, because the points in $\x_S$
	are affinely independent (due to the minimality of $\x_S$). Furthermore, the minimality of $\x_S$ implies that $\y_1 \not\in \conv(\x_{S'})$.
	Therefore, $\conv(\{\y_1,\y_2\}) \cap \conv(\x_{S'}) = \emptyset$, and the lemma follows.}
	\end{proof}

	We say that a hyperplane ${\bp} {\y} \le \pi_0$ \emph{strictly separates} the sets ${Y}\subseteq\R^d$ and ${Y'}\subseteq\R^d$
	if ${\bp} {\y} < \pi_0-1$ for every ${\y}\in {Y}$ and ${\bp} {\y} > \pi_0+1$ for every ${\y}\in
	{Y'}$. If ${Y}=\{{\y_1}\}$ is a singleton, we also say that ${\bp} {\y}\le \pi_0$ \emph{strictly separates}
	${\y_1}$ from ${Y'}$ if ${\bp} {\y}\le \pi_0$ strictly separates ${Y}$ and ${Y'}$. Note that a strictly-separating
	hyperplane between two sets of points exists whenever those sets \edit{are linearly separable}.

	\begin{theorem}
	The inequality \eqref{in.obstacle} defines a facet of $\PP$ if and only if ${\x_S}$ is a nontrivial minimal obstacle between $\x_{j_1}$ and
	$\x_{j_2}$.
	\end{theorem}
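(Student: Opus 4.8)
The plan is to follow the same template used for the facetness of the convex-inclusion inequalities, now feeding in Lemmas~\ref{lem.affine}--\ref{lem.separable} to supply the auxiliary feasible solutions. For the ``if'' direction, assume $\x_S$ is a nontrivial minimal obstacle between $\x_{j_1}$ and $\x_{j_2}$, let $F$ be the face of $\PP$ defined by \eqref{in.obstacle}, and suppose $\bl\p + \mu q + \gamma z = \lambda_0$ holds for every $(\p,q,z)\in F$; the goal is to show $(\bl,\mu,\gamma)$ is a scalar multiple of the coefficient vector of \eqref{in.obstacle}. I would fix a base solution $\base\in F$ that assigns every $\x_i$, $i\in S$, to group $k$ and $\x_{j_1}$ to group $\ell$, leaving $\x_{j_2}$ and all other points unassigned. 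This is feasible because nontriviality gives $\x_{j_1}\notin\conv(\x_S)$, so a strictly separating hyperplane for the pair $(k,\ell)$ exists, and it lies on $F$ since the left-hand side of \eqref{in.obstacle} equals $s+1$. I would first obtain $\bl=0$ and $\mu=0$ by perturbing the $\p$- and $q$-variables of $\base$: the big-$M$ constraints \eqref{eq.formulation.blue} for $i\in S$ and the pair $(k,\ell)$ are strictly satisfied, so small perturbations remain feasible and on $F$, exactly as in the corresponding claims for the convex-inclusion inequalities.

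The core of the argument is the chain of coefficient identities. To get $\gamma_{ik}=\gamma_{j_2\ell}$ for each $i\in S$, I would compare $\base$ with the solution obtained by unassigning $\x_i$ from group $k$ and assigning $\x_{j_2}$ to group $\ell$; minimality gives $\conv(\x_{S\setminus\{i\}})\cap\conv(\{\x_{j_1},\x_{j_2}\})=\emptyset$, so the pair $(k,\ell)$ is still separable, the solution is feasible, and it stays on $F$. Running the symmetric construction with $j_1$ and $j_2$ interchanged yields $\gamma_{ik}=\gamma_{j_1\ell}$, so all these coefficients coincide, say with common value $\gamma$. Next, for $\gamma_{j'\ell'}=0$ with $j'\in R\setminus\{j_1,j_2\}$ I would invoke Lemma~\ref{lem.separable} to find $i\in S$ with $\conv(\{\x_{j_1},\x_{j_2},\x_{j'}\})$ linearly separable from $\conv(\x_{S\setminus\{i\}})$, build the on-$F$ solution assigning $\x_{S\setminus\{i\}}$ to $k$ and $\x_{j_1},\x_{j_2}$ to $\ell$, and compare it with the solution that additionally assigns $\x_{j'}$ to group $\ell'$ (the pair $(k,\ell')$ stays separable); these two solutions differ only in $z_{j'\ell'}$, forcing $\gamma_{j'\ell'}=0$. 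The coefficients $\gamma_{j_1\ell'}$ and $\gamma_{j_2\ell'}$ for $\ell'\ne\ell$ are killed by the analogous comparison applied to a solution where $z_{j_1\ell}=0$ (resp.\ $z_{j_2\ell}=0$) and $\x_{j_1}$ (resp.\ $\x_{j_2}$) is moved to group $\ell'$, using nontriviality. For $\gamma_{ik'}=0$ with $i\in S$, $k'\ne k$, I would start from the on-$F$ solution assigning $\x_{S\setminus\{i\}}$ to $k$ and $\x_{j_1},\x_{j_2}$ to $\ell$ and compare it with the one that additionally assigns $\x_i$ to group $k'$: minimality makes $\x_S$ affinely independent, and in the non-degenerate situation $\{\x_i\}$ is separable from $\{\x_{j_1},\x_{j_2}\}$, so the pair $(k',\ell)$ is fine. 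Finally $\gamma_{ik'}=0$ for $i\notin S$ follows by simply comparing $\base$ with $\base + \unitz_{ik'}$, placing $\x_i$ in an otherwise-empty group. Collecting these claims gives $(\bl,\mu,\gamma)=\gamma\cdot(\text{coeff.\ vector of \eqref{in.obstacle}})$ with $\lambda_0=(s+1)\gamma$, so $F$ is a facet.

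For the ``only if'' direction I would argue contrapositively. If $\x_S$ is non-minimal, pick $i_0\in S$ with $\x_{S\setminus\{i_0\}}$ still an obstacle between $\x_{j_1}$ and $\x_{j_2}$; then the obstacle inequality for $S\setminus\{i_0\}$ is valid, and subtracting it from $z_{j_1\ell}+z_{j_2\ell}+\sum_{i\in S}z_{ik}=s+1$ forces $z_{i_0 k}=1$ on $F$, so $F$ is contained in the proper face $\{z_{i_0 k}=1\}\cap\PP$, hence has dimension at most $\dimension(\PP)-2$. If $\x_S$ is trivial, say $\x_{j_1}\in\conv(\x_S)$, then the convex-inclusion inequality $z_{j_1\ell}+\sum_{i\in S}z_{ik}\le s$ is valid and the same subtraction forces $z_{j_2\ell}=1$ on $F$, again embedding $F$ in a proper face. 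Since $\PP$ is full-dimensional (Proposition~\ref{prop.dimension}), in both cases \eqref{in.obstacle} does not induce a facet.

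The step I expect to be the main obstacle is the coefficient-matching chain, specifically certifying feasibility of each auxiliary assignment: every time one or two points are moved between groups we must re-exhibit separating hyperplanes for all resulting nonempty group pairs, and it is precisely here that nontriviality, minimality, and Lemmas~\ref{lem.affine}--\ref{lem.separable} do the work, with extra care needed in degenerate configurations (such as $s=1$, or $\x_S$ not affinely spanning) where the separations used above degenerate and the constructions must be adapted.
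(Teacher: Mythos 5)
Your proposal tracks the paper's proof almost claim for claim (same base solution $\base$ with $\x_S$ in group $k$ and $\x_{j_1}$ in group $\ell$, same perturbation argument for $\bl=0,\mu=0$, same use of minimality for $\gamma_{ik}=\gamma_{j_1\ell}=\gamma_{j_2\ell}$, Lemma~\ref{lem.separable} for $\gamma_{j'\ell'}$, and the same decomposition/containment arguments for the converse), but there is one genuine gap in the coefficient chain: you never establish $\gamma_{ik}=0$ for $i\in B\setminus S$ \emph{in the group $k$ that appears in the inequality}. Your final step, ``comparing $\base$ with $\base+\unitz_{ik'}$, placing $\x_i$ in an otherwise-empty group,'' only covers $k'\ne k$, since group $k$ is not empty in $\base$. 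For $k'=k$ the naive solution $\base+\unitz_{ik}$ can be infeasible: adding $\x_i$ to group $k$ requires $\conv(\x_{S\cup\{i\}})$ to be separable from $\x_{j_1}$, which fails whenever $\x_{j_1}\in\conv(\x_{S\cup\{i\}})$. Without $\gamma_{ik}=0$ for these variables (whose coefficient in \eqref{in.obstacle} is $0$), you cannot conclude that $(\bl,\mu,\gamma)$ is a multiple of the coefficient vector.

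This is precisely the step for which the paper proves Lemma~\ref{lem.obstacle}: for a nontrivial minimal obstacle, at least one of $\x_{j_1},\x_{j_2}$ lies outside $\conv(\x_{S\cup\{i\}})$, so one compares either $\base$ (with $\x_{j_1}$ in $\ell$) or its symmetric counterpart (with $\x_{j_2}$ in $\ell$) against the solution that additionally assigns $\x_i$ to group $k$. You cite Lemmas~\ref{lem.affine}--\ref{lem.separable} generically at the outset, but your explicit constructions never invoke Lemma~\ref{lem.obstacle}, and this is the only place it is needed; adding that case analysis closes the gap. A minor further remark: your hedging about degenerate configurations (e.g.\ $s=1$ with $\x_i$ on the segment $\conv(\{\x_{j_1},\x_{j_2}\})$, which affects the claim $\gamma_{ik'}=0$ for $i\in S$, $k'\ne k$) is legitimate; the paper itself asserts the separability of $\{\x_i\}$ from $\{\x_{j_1},\x_{j_2}\}$ there without comment, so on that point you are no worse off than the published argument.
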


	\begin{proof}
	Assume first that ${\x_S}$ is nontrivial and minimal. Let $F$ be the face of $\PP$ defined by \eqref{in.obstacle}, and let
	$({\bl},\mu,\gamma)$ and $\lambda_0$ such that ${\bl} \p + \mu q + \gamma z = \lambda_0$ for every $(\p,q,z) \in F$.
	Define $\base = (\p, q, z)$ to be the solution obtained by setting $z_{ik} = 1$ for every $i\in {S}$, $z_{j_1\ell} = 1$, and all the remaining
	$z$-variables to 0. Also, set $\p_{k\ell}$ and $q_{k\ell}$ equal to a hyperplane strictly separating $\x_{S}$ from $z_{j_1\ell}$ (which exists
	since $\x_{j_1} \not\in \conv(\x_{S})$), $q_{k\ell'} = -2$ for $\ell' \ne \ell$, and the rest of the ${\p}$- and $q$-variables equal to
	0. We clearly have $\base\in F$.

\begin{claims}
%	\noindent\mathversion{bold}\textbf{Claim 1: $\lambda = 0$.}\mathversion{normal}
\clitem{$\lambda = 0$.}\label{it.claim2.1}
	For $k'\in L_B$, $\ell'\in L_R$, and ${a}\in[d]$, consider the solution $\base_{k'\ell' {a}} := \base + \varepsilon \unitp_{k'\ell'
	{a}}$, for a small enough $\varepsilon$, which is feasible since only two groups are nonempty, and are strictly separated in $\base$ (so a
	minor change in one coefficient keeps the separation). The existence of the solutions $\base$ and $\base_{k'\ell' {a}}$ in $F$ implies
	$\lambda_{k'\ell' {a}} = 0$. $\Diamond$

%	\noindent\mathversion{bold}\textbf{Claim 2: $\mu = 0$.}\mathversion{normal}
\clitem{$\mu = 0$.}\label{it.claim2.2}
	Similarly, for $k'\in L_B$ and $\ell'\in L_R$, the solution $\base_{k'\ell'} := \base + \varepsilon \unitq_{k'\ell'}$ is
	feasible and also satisfies \eqref{in.obstacle} with equality, \edit{for $\varepsilon$ small enough}.
	Again, the existence of $\base$ and $\base_{k'\ell'}$ implies $\mu_{k'\ell'} = 0$. $\Diamond$

%	\noindent\mathversion{bold}\textbf{Claim 3: $\gamma_{ik} = \gamma_{j_1\ell} = \gamma_{j_2\ell}$ for $i\in P$.}\mathversion{normal}
\clitem{$\gamma_{ik} = \gamma_{j_1\ell} = \gamma_{j_2\ell}$ for $i\in {S}$.}
	Let ${S}' := {S}\backslash\{i\}$ and consider the solution $\base_i$ constructed by setting $z_{i'k} =
	1$ for $i'\in {S}'$, $z_{j_1\ell} = z_{j_2\ell} =1$, and the remaining $z$-variables to 0. Since \edit{${\x_{S'}}$ is minimal, then}
	${\x_{S'}}$ is not an obstacle between
	$\x_{j_1}$ and $\x_{j_2}$, so there exists a hyperplane ${\bp \y} = \pi_0$ separating $\{\x_{j_1}, \x_{j_2} \}$ from $\x_{{S}'}$. Let
	$\p_{k\ell} = {\bp}$, $q_{k\ell} = -\pi_0$, \edit{$q_{k\ell'} = -2$} for $\ell' \ne \ell$, $q_{k'\ell} = 2$ for $k' \ne k$, and set the remaining
	$\p$- and $q$-variables to 0. Claim~\ref{it.claim2.1} and Claim~\ref{it.claim2.2} ensure that ${\bl} = {\bm{0}}$ and $\mu = 0$, hence
	the existence of $\base_i$ and $\base$ in $F$ implies $\gamma_{ik} = \gamma_{j_2\ell}$. A symmetric argument shows that $\gamma_{ik} = \gamma_{j_1\ell}$. $\Diamond$

%	\noindent\mathversion{bold}\textbf{Claim 4: $\gamma_{j_1\ell'} = \gamma_{j_2\ell'} = 0$ for $\ell'\in L_R$, $\ell'\ne\ell$.}\mathversion{normal}
\clitem{$\gamma_{j_1\ell'} = \gamma_{j_2\ell'} = 0$ for $\ell'\in L_R$, $\ell'\ne\ell$.}
	Construct a solution $\base_{\ell'}$ by setting $z_{j_2\ell'} = 1$, and setting $(\p_{k\ell'}, q_{k\ell'})$ as a hyperplane strictly separating
	$\conv(\x_{S})$ from $\x_{j_2}$, \edit{which exists since $\x_S$ is a nontrivial obstacle}. This solution is feasible and, together with $\base$,
	Claim~\ref{it.claim2.1}, and Claim~\ref{it.claim2.2}, shows that $\gamma_{j_2\ell'} = 0$.
	A symmetric argument shows that \edit{$\gamma_{j_1\ell'} = 0$.} $\Diamond$

%	\noindent\mathversion{bold}\textbf{Claim 5: $\gamma_{i'k} = 0$ for $i'\not\in P$.}\mathversion{normal}
\clitem{$\gamma_{ik} = 0$ for $i\not\in {S}$.}\label{it.claim2.5}
	Let ${S}' := {S}\cup\{i\}$. Since ${\x_S}$ is a nontrivial obstacle between $\x_{j_1}$ and $\x_{j_2}$, then
	Lemma~\ref{lem.obstacle} implies that $\x_{j_1} \not\in \conv(\x_{{S}'})$ or $\x_{j_2} \not\in \conv(\x_{{S}'})$. If $\x_{j_1} \not\in
	\conv(\x_{S'})$, then the solution $\base_{i} := \base + \unitz_{ik}$ is feasible since there is a hyperplane separating
	$\x_{j_1}$ from $\x_{S'}$, and the existence of $\base$ and $\base_{i}$ \edit{(together with Claim~\ref{it.claim2.1} and
	Claim~\ref{it.claim2.2})} implies the claim. If $\x_{j_2} \not\in\conv(\x_{S'})$ then a symmetrical argument
	also settles the claim. $\Diamond$

%	\noindent\mathversion{bold}\textbf{Claim 6: $\gamma_{j'\ell'} = 0$ for $j'\ne j_1, j_2$ and $\ell'\in L_R$.}\mathversion{normal}
\clitem{$\gamma_{j'\ell'} = 0$ for $j'\ne j_1, j_2$ and $\ell'\in L_R$.}
	\edit{Lemma~\ref{lem.separable} ensures that there exists $i\in S$ such that $\x_{S'}$ and $\{\x_{j_1}, \edit{\x_{j_2}}, \x_{j'}\}$
	are linearly separable, where $S' = S \backslash \{i\}$. Construct a solution $\base_{j'\ell'}$ by setting $z_{i'k} = 1$ for all
	$i'\in {S'}$, setting $z_{j_1\ell} = \edit{z_{j_2\ell}} = z_{j'\ell'} = 1$, and setting the remaining $z$-variables to 0.
	\begin{itemize}
	\item If $\ell\ne\ell'$, then set $(\p_{k\ell}, q_{k\ell})$ equal to the hyperplane separating $\x_{S'}$ from $\x_p$, and $(\p_{k\ell'},
	q_{k\ell'})$ equal to the hyperplane separating $\x_{S'}$ from $x_{j'}$.
	\item If $\ell = \ell'$, then set $(\p_{k\ell}, q_{k\ell})$ equal to the hyperplane separating $\x_{S'}$ from $\{\x_{j_1}, \x_{j_2},
	\x_{j'}\}$, whose existence is guaranteed by Lemma~\ref{lem.separable}.
	\end{itemize}
	This solution is feasible and satisfies \eqref{in.obstacle} with equality. By resorting to Claim~\ref{it.claim2.1} and
	Claim~\ref{it.claim2.2}, the existence of $\base$ and $\base_{j'\ell'}$ shows that $\gamma_{j'\ell'} = 0$.} $\Diamond$

%	\noindent\mathversion{bold}\textbf{Claim 7: $\gamma_{ik'} = 0$ for $i\in P$ and $k'\ne k$.}\mathversion{normal}
\clitem{$\gamma_{ik'} = 0$ for $i\in {S}$ and $k'\ne k$.}
	Define $S' = S \backslash \{i\}$. By the minimality of ${\x_S}$, the solution $\base_{ik}$ obtained by setting $z_{i'k} = 1$ for $i'\in
	{S'}$, $z_{j_1\ell} = z_{j_2\ell} = 1$, and $z_{ik'} = 1$ is feasible, since the sets ${S'}$ and $\{\x_{j_1}, \x_{j_2}\}$ are linearly
	separable, and so are the sets $\{\x_i\}$ and $\{\x_{j_1}, \x_{j_2}\}$. Hence we can set the variables $(\p_{k\ell}, q_{k\ell})$ and $(\p_{k'\ell},
	q_{k'\ell})$ to the corresponding separating hyperplanes. Since ${\bl} = {\bm{0}}$ and $\mu = 0$, the existence of $\base$ and
	$\base_{ik'}$ implies that $\gamma_{ik'} = 0$. $\Diamond$

%	\noindent\mathversion{bold}\textbf{Claim 8: $\gamma_{i'k'} = 0$ for $i'\not\in P$ and $k'\ne k$.}\mathversion{normal}
\clitem{$\gamma_{ik'} = 0$ for $i\not\in {S}$ and $k'\ne k$.}
	Similarly to the proof of Claim~\ref{it.claim2.5}, the solution $\base_{ik'} = \base + \unitz_{ik'}$ is feasible and satisfies \eqref{in.obstacle}
	with equality, so $\gamma_{ik'} = 0$. $\Diamond$
\end{claims}
	By combining these claims, we conclude that $({\bl},\mu,\gamma)$ is a multiple of the coefficient vector of \eqref{in.obstacle}, hence $F$ is a
	facet of $\PP$.

	For the converse direction, if ${\x_S}$ is trivial with $\x_{j_1} \in \conv({\x_S})$, then \eqref{in.obstacle} is the sum of $z_{j_2\ell}
	\leq 1$ and the inequality \eqref{in.convex.inclusion} associated with $\x_{j_1}$ and ${S}$ (analogously if $\x_{j_2} \in
	\conv({\x_S})$).
	On the other hand, if ${S}$ is not minimal, say ${S' := S}\backslash\{i'\}$ for some $i'\in {S}$ is still an obstacle between
	$\x_{j_1}$ and $\x_{j_2}$, then any solution having $z_{i'k} = 0$ cannot satisfy \eqref{in.obstacle} with equality, since such a solution must have
	$z_{ik} = 1$ for every $i\in {S'}$ and $z_{j_1\ell} = z_{j_2\ell} = 1$ in order to attain equality. However, in this setting no hyperplane
	separating $\{\x_{j_1}, \x_{j_2}\}$ from $\conv(\x_{S'})$ exists, hence such a solution is not feasible. We thus conclude that every solution in the face of $\PP$ induced by \eqref{in.obstacle} satisfies $z_{i'k} = 1$, hence this face is not a facet of $\PP$.
	\end{proof}

	The symmetrical inequalities considering two points $\x_{i_1}, \x_{i_2} \in \X_B$ and a set of points in $\X_R$ whose convex hull includes $\conv(\{\x_{i_1}, \x_{i_2}\})$ have the same properties as \eqref{in.obstacle}.

	Inequalities \eqref{in.convex.inclusion} and \eqref{in.obstacle} are based on similar ideas, and we can consider a natural generalization of these
	inequalities in the following way. If $S\subseteq B$ and $T\subseteq R$ are two sets of points with $\conv(\x_S) \cap \conv(\x_T) \ne \emptyset$, then the inequality
	\begin{displaymath}
	\sum_{i\in S} z_{ik} + \sum_{j\in T} z_{j\ell} \ \le\ |S| + |T| - 1
	\end{displaymath}
	is valid for $\PP$ for every $k\in L_B$ and $\ell\in L_R$. However, this inequality may not be not facet-inducing when $\conv(\x_S) \cap \conv(\x_T)$ has nonzero dimension.

\section{The obstacle graph} \label{sec.ograph}

\edit{Let $V\subseteq R$ be the index set of a subset of points in $\X_R$. The graph $G=(V,E)$ is an \emph{obstacle graph} for $V$ if, for each
$jj'\in E$ there exists \modif{a subset $S_{jj'} \subseteq B$ such that $\x_{S_{jj'}}$ is an obstacle between $\x_j$ and $\x_{j'}$}. A \emph{stable set} in the graph $G = (V,E)$ is a set $I\subseteq V$ of vertices such that $jj'\not\in E$ for every $j,j'\in I$. The maximum cardinality of a stable set in $G$ is the \emph{stability number} of $G$, and is denoted by $\alpha(G)$.}

\begin{proposition} \label{propo.ograph}
\edit{Let $V \subseteq R$ and let $G=(V,E)$ be an obstacle graph for $V$. Let $k:E\to L_B$ be a function associating a group $k_e \in L_B$ to each edge $e\in E$. For $\ell \in L_R$, the \emph{obstacle rank inequality}}
\begin{equation}
\sum_{j \in V} z_{j\ell} \ \le \ \alpha(G) + \sum_{e\in E} \Big( |S_e| - \sum_{i\in S_e} z_{ik_e} \Big), \label{in.rank}
\end{equation}
\edit{is valid for $\PP$.}
\end{proposition}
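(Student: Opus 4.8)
The plan is to fix an arbitrary feasible solution $(\p,q,z)\in\PP$ and show it satisfies \eqref{in.rank}. First I would observe that the right-hand side can be rewritten by grouping the edge terms: for each edge $e=jj'\in E$, the quantity $|S_e|-\sum_{i\in S_e}z_{ik_e}$ is a nonnegative integer (since each $z_{ik_e}\in\{0,1\}$), and it equals $0$ precisely when all points of $\x_{S_e}$ are assigned to the group $k_e$. So the key step is to identify, inside the given solution, a subgraph $G'=(V,E')$ of $G$ where $E'$ consists exactly of those edges $e$ for which $|S_e|-\sum_{i\in S_e}z_{ik_e}=0$, i.e. the edges whose obstacle is ``fully packed'' into its designated group.

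The crucial observation, which comes directly from the obstacle inequality \eqref{in.obstacle} (or the argument behind it), is this: if $e=jj'\in E'$, then $\x_{S_e}$ is an obstacle between $\x_j$ and $\x_{j'}$ with all of $\x_{S_e}$ assigned to group $k_e$, so $\x_j$ and $\x_{j'}$ cannot both be assigned to the same group $\ell\in L_R$ — otherwise that group $\ell$ and the group $k_e$ would fail to be linearly separable, contradicting feasibility. Hence the set $I:=\{j\in V: z_{j\ell}=1\}$ of points of $V$ assigned to group $\ell$ contains no edge of $E'$; that is, $I$ is a stable set in $G'$. Therefore $\sum_{j\in V}z_{j\ell}=|I|\le\alpha(G')$.

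It then remains to bound $\alpha(G')$ in terms of $\alpha(G)$ and the ``defect'' $\sum_{e\in E}(|S_e|-\sum_{i\in S_e}z_{ik_e})$. Since $G'$ is obtained from $G$ by deleting the edge set $E\setminus E'$, and deleting a single edge increases the stability number by at most $1$, we get $\alpha(G')\le\alpha(G)+|E\setminus E'|$. Finally, each edge $e\in E\setminus E'$ contributes at least $1$ to $\sum_{e\in E}(|S_e|-\sum_{i\in S_e}z_{ik_e})$ (and each $e\in E'$ contributes $0$, being nonnegative in any case), so $|E\setminus E'|\le\sum_{e\in E}(|S_e|-\sum_{i\in S_e}z_{ik_e})$. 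Chaining the inequalities,
\[
\sum_{j\in V}z_{j\ell}\ \le\ \alpha(G')\ \le\ \alpha(G)+|E\setminus E'|\ \le\ \alpha(G)+\sum_{e\in E}\Big(|S_e|-\sum_{i\in S_e}z_{ik_e}\Big),
\]
which is exactly \eqref{in.rank}.

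The step I expect to require the most care is the monotonicity claim ``deleting one edge raises $\alpha$ by at most one'': this is a standard graph-theoretic fact (adding one edge to a graph can destroy at most... more precisely, $\alpha(G-e)\le\alpha(G)+1$ because a maximum stable set of $G-e$ misses the edge $e=uv$ in at most one endpoint after removing one of $u,v$), but it should be stated cleanly, perhaps by induction on $|E\setminus E'|$ so that one only ever argues about a single edge deletion. Everything else is bookkeeping: the reduction of the right-hand side, the feasibility/obstacle argument for stability in $G'$, and the integrality of the defect terms.
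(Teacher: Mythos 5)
Your proposal is correct and follows essentially the same route as the paper's proof: define the subgraph $G'$ of ``fully packed'' edges, observe that the points of $V$ assigned to group $\ell$ form a stable set in $G'$, bound $\alpha(G') \le \alpha(G) + |E\setminus E'|$ via single-edge deletions, and absorb $|E\setminus E'|$ into the integer defect terms. No gaps; the chain of inequalities you give is the one in the paper.
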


\begin{proof}
\edit{Let $(\p,q,z) \in \PP$ be a feasible solution and $G' = (V,E')$ be the subgraph of $G$ defined by $E' = \{e\in E: \sum_{i \in S_e} z_{ik_e} =
|S_e|\}$. Note that $jj' \in E'$ implies that $\x_j$ and $\x_{j'}$ cannot belong to the same group in $L_R$ (although the converse implication may not
hold, {\em e.g.}, if the obstacle between them is not minimal), hence the variables $\{z_{j\ell}\}_{j\in V}$ induce a stable set in $G'$. For any
$e\in E$, we have $\alpha(G\backslash e) \le \alpha(G) + 1$, hence $\alpha(G') \le \alpha(G) + |E\backslash E'|$. This implies}
\begin{eqnarray}
\sum_{j \in V} z_{j\ell} & \le & \alpha(G') \ \le \ \alpha(G) + |E\backslash E'| \nonumber \\
                         & \le & \alpha(G) + \sum_{e\in E\backslash E'} \Big( |S_e| - \sum_{i\in S_e} z_{ik_e} \Big) \nonumber \\
                         & \modif{=} & \alpha(G) + \sum_{e\in E} \Big( |S_e| - \sum_{i\in S_e} z_{ik_e} \Big). \nonumber
\end{eqnarray}
\edit{Since $(\p,q,z)$ is an arbitrary solution, then \eqref{in.rank} is valid for $\PP$.}
\end{proof}

\edit{The {\em incidence vector} of a stable set $I$ is ${x}^I \in \{0,1\}^{|V|}$ defined by ${x}^I_i = 1$ if and only if $i\in I$. The \emph{stable
set polytope} of a graph $G$ is $STAB(G) = \conv\{ {x}^I: I$ is a stable set of $G\}$. For an obstacle graph $G=(V,E)$ for $V$, define $\mathcal{S} =
\cup_{e\in E} S_e$. We say that the obstacle graph $G=(V,E)$ is \emph{disjoint} if $S_e \cap S_{e'} = \emptyset$ for every $e,e'\in E$, $e\ne e'$. The
obstacle graph is \emph{minimal} if $S_{jj'}$ is a minimal obstacle between $\x_j$ and $\x_{j'}$, for every $jj'\in E$. An edge $e\in E$ is
\emph{critical} if $\alpha(G\backslash e) = \alpha(G) + 1$, and we say that $G$ is \emph{critical} if every edge in $E$ is
critical}. \modif{It is known that a critical and connected graph is also a rank-facet inducing graph, {\em i.e.,} $\sum_{j\in V} x_j \le \alpha(G)$
induces a facet of $STAB(G)$~\cite{Chvatal75}}. \modif{A simple example of a disjoint, minimal, and critical obstacle graph is depicted in
Figure~\ref{fig.ograph}.}

	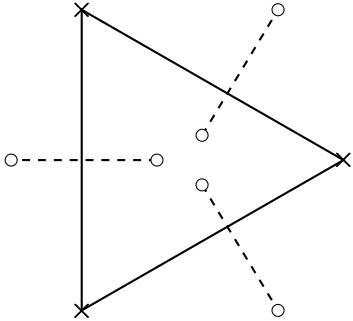
\begin{figure}
	\centering
	\begin{tikzpicture}[scale=2.0,
    %Option for nice arrows
    >=stealth, %
    inner sep=0pt, outer sep=2pt,%
    axis/.style={thick,color=black, draw opacity=1},
    obstacle/.style={fill=black!15!white, fill opacity=0.8},
]
	\coordinate (v1) at (0, 0, 0);
	\coordinate (v2) at (0, 2, 0);
	\coordinate (v3) at (1.74, 1, 0);
	\coordinate (v4-1) at (0.8, 1.165, 0);
	\coordinate (v4-2) at (0.8, 0.835, 0);
	\coordinate (v4-3) at (0.5, 1, 0);
	\coordinate (v5) at (1.305, 2, 0);
	\coordinate (v6) at (1.305, 0, 0);
	\coordinate (v7) at (-0.469, 1, 0);

%    \draw[axis,-] (-2,0,0) -- (0, 0, 0);
%    \draw[axis,dashed] (o6) -- (o7);

    \draw[axis] (v1) -- (v2) -- (v3) -- cycle;
    \foreach \i/\j in {1/5,2/6,3/7} {
	    \draw[axis,dashed] (v4-\i) -- (v\j);
	    \draw[fill=white,fill opacity=1] (v4-\i) circle (0.4mm);
	    \draw[fill=white,fill opacity=1] (v\j) circle (0.4mm);
	  	\draw[draw opacity=1] (v\i) node {$\times$} node {$\times$} node {$\times$};
	}
%    \fill[black] (0,\sincoords/1.8,0) circle (0.8mm) node[right,xshift=0.1cm,yshift=-0.2cm] {$\y_2$};
	\end{tikzpicture}
		\caption{The critical obstacle graph defined by three disjoint and minimal obstacles (indicated with dashed lines).}
		\label{fig.ograph}
	\end{figure}

\begin{theorem} \label{thm.rank}
\edit{Let $V \subseteq R$, $|V| > 1$, and consider a disjoint, minimal, critical, and connected obstacle graph $G=(V,E)$ for $V$. Let $k:E\to L_B$
\modif{and $\mathcal S_t = \{ i \in \mathcal S: i\in S_e$ for some $e\in E$ with $k_e = t \}$ for every $t \in L_B$}.
If \begin{enumerate}[label=({\it \roman*})]
\item \label{hyp.separable} $\x_I$ and \modif{$\x_{\mathcal S_t}$} are linearly separable, for every maximum-size stable set $I$ of $G$ and every
\modif{$t\in L_B$},
\item \label{hyp.separable2} for every $j \in R \setminus V$, there exists a maximum-size stable set $I$ of $G$ such that \modif{$\x_{I\cup\{j\}}$} is \modif{linearly} separable from \modif{$\x_{\mathcal S_t}$, for every $t\in L_B$}, and
\item \label{hyp.separable3} for every $i\in B\backslash\mathcal{S}$ and every \modif{$t\in L_B$}, there exists a maximum-size stable set $I$ of $G$ such that \modif{$\x_I$} is \modif{linearly} separable from $\x_{S'}$ where $S' := \modif{S_t}\cup\{i\}$,
\end{enumerate}
then \eqref{in.rank} induces a facet of $\PP$.}
\end{theorem}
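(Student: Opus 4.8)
The plan is to follow the same pattern used in the proofs for the convex-inclusion and obstacle inequalities: exhibit a feasible base solution $\base$ lying on the face $F$ defined by \eqref{in.rank}, and then show that any valid equation $\bl\p + \mu q + \gamma z = \lambda_0$ satisfied by all points of $F$ must be a scalar multiple of the coefficient vector of \eqref{in.rank}. Concretely, rewrite \eqref{in.rank} as $\sum_{j\in V} z_{j\ell} + \sum_{e\in E}\sum_{i\in S_e} z_{ik_e} \le \alpha(G) + \sum_{e\in E}|S_e|$; its coefficient vector has a $+1$ on each $z_{j\ell}$ for $j\in V$, a $+1$ on each $z_{ik_e}$ for $i\in S_e$ (these are well-defined and distinct because $G$ is disjoint), and $0$ everywhere else. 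So I must prove: (a) $\bl=\bm 0$ and $\mu=0$; (b) $\gamma_{ik_e}$ is a common constant $c$ for all $i\in S_e$, $e\in E$, and $\gamma_{j\ell}=c$ for all $j\in V$; (c) all other $\gamma$-coordinates vanish.

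First I would construct the base solution $\base$: pick a maximum stable set $I_0$ of $G$, set $z_{j\ell}=1$ for $j\in I_0$, and for each edge $e\in E$ set $z_{ik_e}=1$ for all $i\in S_e$; all other $z$-variables are $0$. Hypothesis \ref{hyp.separable} (applied with $I=I_0$, for each $t\in L_B$) guarantees that for each group $t$ the set $\x_{\mathcal S_t}$ of blue points assigned to $t$ is linearly separable from $\x_{I_0}$, so we can choose each $(\p_{t\ell},q_{t\ell})$ to be a strictly-separating hyperplane; set the remaining $q$-variables to safe constants (e.g.\ $-2$ or $+2$) and remaining $\p$-variables to $0$. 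This $\base$ is feasible and, since $|I_0|=\alpha(G)$ and every $S_e$ is fully selected, it satisfies \eqref{in.rank} with equality. Claims~(a) then follow exactly as in the earlier proofs: perturbing $\p_{k'\ell'a}$ or $q_{k'\ell'}$ by a small $\varepsilon$ keeps strict separation, stays in $F$, and forces $\lambda_{k'\ell'a}=0$ and $\mu_{k'\ell'}=0$.

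For part (b) I would use the criticality and connectedness of $G$. Criticality means every edge $e=jj'$ satisfies $\alpha(G\setminus e)=\alpha(G)+1$, so there is a stable set $I_e$ of $G\setminus e$ of size $\alpha(G)+1$ containing both $j$ and $j'$. For a fixed edge $e$ and a fixed $i^*\in S_e$: build the solution that selects $I_e$ in group $\ell$, selects $S_{e'}$ for all $e'\ne e$ in their groups $k_{e'}$, and selects $S_e\setminus\{i^*\}$ in group $k_e$. Linear separability of $\x_{I_e}$ (a stable set of $G\setminus e$) from the remaining blue points — which holds because $\x_{S_e\setminus\{i^*\}}$ is no longer an obstacle between $\x_j$ and $\x_{j'}$, and the other obstacles are still ``broken up'' only where needed — lets us place the hyperplanes, giving a feasible solution in $F$ (left side increases by $1$ from dropping $i^*$ compared to... actually one checks both sides: dropping $z_{i^*k_e}$ loses $1$ on the LHS but gaining the extra stable-set vertex regains it, so equality is preserved). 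Comparing with $\base$ and using $\bl=\bm 0,\mu=0$ yields $\gamma_{i^*k_e}=\gamma_{v\ell}$ where $v$ is the extra vertex of $I_e$. Working through all edges and all $i^*$, and using connectedness of $G$ to chain these equalities across vertices (swapping one stable set for another that differs in one vertex along an edge), one propagates a single common value $c$ to every $\gamma_{j\ell}$, $j\in V$, and every $\gamma_{ik_e}$, $i\in S_e$. I expect this chaining argument — making sure every required combination of ``partially broken'' obstacles is simultaneously separable so the intermediate solutions are genuinely feasible — to be the main obstacle; it is exactly where the disjointness and minimality hypotheses, together with a lemma in the spirit of Lemma~\ref{lem.separable} applied edge-by-edge, have to carry the weight.

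Finally, part (c): the vanishing of the remaining $\gamma$-coordinates. For $z_{j\ell}$ with $j\in R\setminus V$, hypothesis \ref{hyp.separable2} provides a maximum stable set $I$ with $\x_{I\cup\{j\}}$ separable from each $\x_{\mathcal S_t}$; adding $z_{j\ell}=1$ to the corresponding base-type solution keeps it feasible and on $F$ (the RHS is unchanged and the LHS gains $z_{j\ell}$ but we started from a non-tight-on-$V$ slack... more precisely we start from a solution selecting $I$, whose value on $V$ is $\alpha(G)$, add $j\notin V$, so LHS of \eqref{in.rank} over $j\in V$ is still $\alpha(G)$), forcing $\gamma_{j\ell}=0$; similarly $\gamma_{j\ell'}=0$ for $j\in V$, $\ell'\ne\ell$ by separating $\x_j$ alone into $\ell'$. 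For $z_{ik}$ with $i\notin\mathcal S$ (any group $k$), hypothesis \ref{hyp.separable3} gives a maximum stable set $I$ with $\x_I$ separable from $\x_{S_k\cup\{i\}}$, so $\base+\unitz_{ik}$ is feasible and in $F$, giving $\gamma_{ik}=0$. For $z_{ik'}$ with $i\in S_e$ but $k'\ne k_e$: use minimality of the obstacle $\x_{S_e}$ — drop $i$ from group $k_e$, put it in $k'$, and separate $\{\x_i\}$ from $\{\x_j,\x_{j'}\}$ while separating $\x_{S_e\setminus\{i\}}$ from them as well (as in the corresponding claim of the obstacle-inequality proof), obtaining $\gamma_{ik'}=0$. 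Assembling (a), (b), (c) shows $(\bl,\mu,\gamma)$ is the claimed multiple, so $F$ is a facet of $\PP$.
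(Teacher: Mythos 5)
Your proposal is correct and follows essentially the same route as the paper's proof: the same base solutions built from a maximum stable set with every obstacle fully assigned, the same perturbation argument giving $\lambda=\mu=0$, criticality to obtain a stable set of size $\alpha(G)+1$ containing both endpoints of an edge so that one obstacle point can be traded for one stable-set vertex, connectedness to chain the resulting equalities, and hypotheses (ii)--(iii) together with minimality and disjointness for the remaining zero coefficients. The only minor slips are bookkeeping ones: in part (b) the comparison should be made against the base solution built on $I_e\setminus\{v\}$ (or chained through such solutions along edges) rather than against an arbitrary fixed base solution, and the feasibility of the intermediate points that you flag as the main remaining work is asserted at the same brief level in the paper itself.
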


\begin{proof}
Let $F$ be the face of $\PP$ defined by \eqref{in.rank}, and suppose that ${\bl} \p + \mu q + \gamma z = \lambda_0$ for every solution $(\p,q,z) \in F$. We shall prove that $({\bl}, \mu, \gamma)$ is a multiple of the coefficient vector of \eqref{in.rank}, thus showing that $F$	is indeed a facet of $\PP$.

If $I\subseteq V$ is a stable set, we define $\base^I$ to be the solution given by $z_{j\ell} = 1$ for every $j\in I$, \edit{$z_{ik_e} = 1$ for every
$e\in E$ and every $i \in S_e$, and by setting the remaining $z$-variables to 0. In addition, the hypothesis~\ref{hyp.separable} ensures that there
exists a hyperplane strictly separating $\{\x_j\}_{j\in I}$ from \modif{$\{\x_{i}\}_{i\in \mathcal S_t}$ for every $t\in L_B$, and we set
$(\p_{t\ell}, q_{t\ell})$} to be such a hyperplane}. The solution thus constructed is feasible and satisfies \eqref{in.rank} with equality.

\begin{claims}
\clitem{$\lambda = 0$ and $\mu = 0$.} \label{it.claim3.1}
Let $I\subseteq V$ be a maximum-size stable set of \modif{$G$}. For $k\in L_B$, $\ell'\in L_R$, and ${a}\in[d]$, consider the solution $\base^I_{k\ell' {a}} := \base^I + \varepsilon \unitp_{k\ell' {a}}$, which is feasible since either $k$ and $\ell'$ are \modif{linearly} separated, or else one of these groups is empty. The existence of these solutions implies $\lambda_{k\ell' {a}} = 0$. A similar argument shows $\mu_{k\ell'} = 0$. $\Diamond$

\clitem{$\gamma_{j\ell} = \gamma_{j'\ell}$ for every $j,j'\in V$.} \label{it.claim3.2}
\modif{Since $|V|>1$ and $G$ is connected, let $j$ and $j'$ be such that $e=jj' \in E$ ({\em i.e.}, we assume first that $j$ and $j'$ are adjacent in $G$). Since $e$ is critical, then there exists a stable set $I'$ of $G\backslash e$ with $|I'| = \alpha(G)+1$ and $j, j' \in I'$. So, consider the points $\base^{I_1}$, $I_1 = I'\setminus \{j\}$, and $\base^{I_2}$, $I_2 = I'\setminus \{j'\}$. The solutions $\base^{I_1}$ and $\base^{I_2}$ satisfy \eqref{in.rank} with equality and only differ in the variables $z_{j\ell}$, $z_{j'\ell}$, $\p$, and $q$. Since $\lambda = 0$ and $\mu = 0$ by Claim~\ref{it.claim3.1}, we get $\gamma_{j\ell} =
\gamma_{j'\ell}$. The claim stems from the fact that $G$ is connected}.
$\Diamond$

\clitem{$\gamma_{ik_e} = \gamma_{j\ell}$ for every ${i}\in S_e$, $e\in E$, and every $j\in V$.} \label{it.claim3.3}
\edit{Assume first that $j$ is an endpoint of $e$. \modif{Consider the stable sets $I'$ of $G\backslash e$ and $I=I_1$ of $G$ constructed in
Claim~\ref{it.claim3.2}.} Let $\base^{I'}_i$ be defined from
$\base^{I'}$ by setting $z_{i'k_e} = 1$ for $i' \in S_e\backslash\{i\}$ and $z_{ik_e} = 0$. Since $S_e$ is a minimal obstacle (by the minimality of $G$), the point $\base^{I'}_i$ is feasible. Also, $\base^{I'}_i$ satisfies \eqref{in.rank} with equality due to the fact that $G$ is disjoint. The existence of the points $\base^I$ and $\base^{I'}_i$, together with Claim~1 and Claim~2, implies that $\gamma_{ik_e} = \gamma_{j\ell}$ \modif{for every ${i}\in S_e$, $e\in E$, and every $j\in V$.} $\Diamond$}

\clitem{$\gamma_{j\ell'} = 0$ for every $j\in V$ and $\ell'\ne\ell$.} \label{it.claim3.4}
\modif{Since $|V| > 1$ and $G$ is connected then $j$ is not isolated. Thus, let $e$ be an edge incident to $j$. Furthermore, let $I'$ and $I = I_1$ be
stable sets as defined in Claim~\ref{it.claim3.2}}. The solution $\base^I_{j\ell'}$ obtained from $\base^I$ by setting $z_{j\ell'} = 1$ is feasible
since the additional group $\ell'$ is \modif{linearly separable from $\mathcal S_t$ for all $t \in L_B$ by the hypothesis~\ref{hyp.separable}. The
variables $\{(\p_{t\ell'}, q_{t\ell'})\}_{t \in L_B}$ are set accordingly in $\base^I_{j\ell'}$}. Since the $z$-variables of \modif{$\base^I$ and
$\base^I_{j\ell'}$} only differ in the value of $z_{j\ell'}$, we conclude that $\gamma_{j\ell'} = 0$. $\Diamond$

\clitem{$\edit{\gamma_{ik}} = 0$ for every $e\in E$, ${i}\in S_e$, and $k\ne k_e$.}
\edit{Let $\base^I$ and \modif{$\base^{I'}_i$} be the solutions constructed in Claim~\ref{it.claim3.3}, and construct \modif{$\base^{I'}_{ik}$} from
\modif{$\base^{I'}_i$} by setting \modif{$z_{ik} = 1$}. This new point is feasible by the hypothesis~\ref{hyp.separable} and, together with $\base^{I'}_i$, implies} $\gamma_{ik} =
0$. $\Diamond$

\clitem{$\gamma_{j\ell} = 0$ for every $j \in R \setminus V$.} \label{it.claim3.6}
The hypothesis\modif{~\ref{hyp.separable2}} ensures that there exists a maximum-size stable set $I$ of $G$ such that \modif{$\x_{I\cup\{j\}}$} is \modif{linearly} separable \modif{from $\x_{S_t}$, for every $t\in L_B$}. Consider the solution $\base^I_j$ defined from $\base^I$ by setting $z_{j\ell} = 1$, and adjusting the variables $\{(\p_{k\ell}, q_{k\ell})\}_{k\in L_B}$ accordingly. This construction, together with $\base^I$ \modif{and
Claim~\ref{it.claim3.1}}, establishes the claim.
$\Diamond$

\clitem{$\gamma_{j\ell'} = 0$ for every $j \in R \setminus V$ and $\ell'\ne\ell$.}
Let $I$ be \modif{the} maximum-size stable set of $G$ \modif{constructed in Claim~\ref{it.claim3.6}}, and  construct the solution $\base^I_{j\ell'}$
from $\base^I$ by setting $z_{j\ell'} = 1$ and adjusting the variables $\{(\p_{k\ell'}, q_{k\ell'})\}_{k\in L_B}$ accordingly. \modif{This new solution is feasible since $\x_{\{j\}} \subseteq \x_{I\cup\{j\}}$ is linearly separable from $x_{S_t}$, for every $t\in L_B$.} The existence of the solutions $\base^I_{j\ell'}$ and $\base^I$	implies, together with Claim~\ref{it.claim3.1}, that $\gamma_{j\ell'} = 0$. $\Diamond$

\clitem{$\gamma_{ik} = 0$ for every $i \in B \setminus \mathcal{S}$ and $k\in L_B$.}
Let $I$ be a maximum-size stable set of $G$ and consider the solution $\base^I_{ik}$ defined by setting $z_{ik} = 1$ and adjusting the variables $\{(\p_{k\ell'}, q_{k\ell'})\}_{\ell'\in L_R}$ accordingly. If $k=k_e$ for some $e\in E$, we take $I$ to be a stable set satisfying the hypothesis~\ref{hyp.separable3}. The existence of this solution, together with $\base^I$, implies $\gamma_{ik} = 0$. $\Diamond$
\end{claims}

By combining these claims, we conclude that $({\bl},\mu,\gamma)$ is a multiple of the coefficient vector of \eqref{in.rank}, hence this inequality
induces a facet of $\PP$.
\end{proof}

\modif{The obstacle graph in Figure~\ref{fig.ograph} satisfies the hypotheses of Theorem~\ref{thm.rank} provided all the remaining points within $\X$ are located outside the convex hull of the depicted points (although the converse is not true in general).} Theorem~\ref{thm.rank} is a step towards an answer of our initial interest in the polytope $\PP$, namely the existence of combinatorial structures within the polytope associated with the classification problem addressed in this work. Indeed, \emph{every} facet of the stable set polytope $STAB(G)$ gives rise to facet-inducing inequalities of $\PP$, provided the technical conditions stated in Theorem~\ref{thm.rank} are satisfied. It is interesting to note that the weigthed rank inequalities discussed in~\cite{CorreaDelleDonneKochMarenco2017} can also be adapted in a similar way.

\section{Concluding remarks}
\label{sec.conclusions}

In this work we have started a polyhedral study of a particular 2-class classification problem asking for a classification of points in $\R^d$ into linearly separable groups, and allowing for outliers not to be classified into any group. Our main interest is exploring the polyhedral structure of an integer programming formulation for this problem, and in this direction we were able to identify facet-inducing inequalities involving the binary variables that specify the points assigned to each group. The valid inequalities presented only involve these binary variables, and hint at combinatorial structures present in the classification problem studied, particularly in $Proj_z(\PP)$.

A main result states that every facet-inducing inequality for the stable set polytope can be translated into facets of $Proj_z(\PP)$. It would
be interesting to explore whether similar ideas, namely considering the groups as colors, can be applied to adapt valid inequalities from vertex
coloring, $k$-partite subgraph, and $k$-partition polytopes to the polytope considered in this work. In this vein, a direction for further
investigation is the polyhedral consequences of combining the structure of $Proj_z(\PP)$ with the orbitope structure in order to eliminate symmetries~\cite{KaibelPfetsch2008}.

An important aspect of our approach is that the integer programming formulation results from a projection of a mixed integer one including the
$\p$- and $q$-continuous variables. The particular structure of feasible solutions allows us to search for procedures that derive new
inequalities from inequalities that are valid for subsets of points. More specifically, let $B'\subseteq B$ and $R'\subseteq R$, and define $Q_{B'R'}
:= \{ \p \in \R^d, q \in \R: \p\x_i + q \leq -1$ for every $\x_i\in B'$ and $\p\x_j + q \geq 1$ for every $\x_j\in R' \}$. By definition, $Q_{B'R'}$ is a polytope, and let $\alpha^T (\p, q) \leq \lambda_0$ be a valid inequality for
	$Q_{B'R'}$. Hence, for any $k\in L_B$ and $\ell\in L_R$, the inequality
	\begin{equation}
	\alpha^T (\p_{k\ell}, q_{k\ell}) \ \le \ \lambda_0
	\label{eq.procedure}
	\end{equation}
is satisfied by the solutions having $z_{ik} = 1$ for every $i\in B'$ and $z_{j\ell} = 1$ for every $j\in R'$. We can now lift these inequalities in order to obtain an inequality valid for $\PP$. A straightforward (although not optimal) lifting is given by
	\begin{displaymath}
	\alpha^T (\p_{k\ell}, q_{k\ell}) \ \le \ \lambda_0 + M' \Big(|B'| + |R'| - \sum_{i\in B'} z_{ik} - \sum_{j\in R'} z_{j\ell} \Big).
	\end{displaymath}
	It would be interesting to explore whether there exist valid inequalities involving the $\p$- and $q$-variables not coming from liftings of
	inequalities of the form \eqref{eq.procedure} and their projections onto the $z$-variables space.

\section*{Acknowledgment}

We are very grateful for the invaluable remarks by the anonymous referee, in
special for pointing out some mistakes in the original version of Lemma~\ref{lem.affine} and Proposition~\ref{propo.ograph}, as well as bringing a
Lemma~\ref{lem.affine} proof's sketch to our attention.

\bibliographystyle{plain}
\bibliography{classification}

\end{document}